\documentclass[letter,singlecolumn]{scrartcl}

\usepackage[T1]{fontenc}

\usepackage{graphicx}
\usepackage{amssymb}  
\usepackage{xcolor,mathrsfs}
\usepackage{%
	algorithm,%
    algpseudocode,
	amsthm,%
	bbm,%
	enumerate,%
	hyperref,%
	mathtools,%
	paralist,%
	pgf,%
	pgfplots,%
	tikz,%
	tikzscale,%
}
\usepackage{verbatim}
\usepackage{pgfplots}
\usepgfplotslibrary{fillbetween}
\usetikzlibrary{patterns}
\usepackage{subcaption}
\newcommand{\w}{\textbf{w}}
\newcommand{\x}{\textbf{x}}

\newcommand{\q}{\textbf{q}}

\newcommand{\e}{\lambda}
\newcommand{\J}{\mathcal{J}}
\newcommand{\G}{\mathcal{G}}

\newtheorem{thm}{Theorem}
\newtheorem{prop}[thm]{Proposition}
\newtheorem{cor}[thm]{Corollary}

\newtheorem{lem}[thm]{Lemma}
\newtheorem{asu}{Assumption}
\newtheorem{defn}{Definition}

\newtheorem{exmp}{Example}

\begin{document}

\title{Learning Equilibria in  Coordination Games via Minorization-Maximization\footnote{This work was supported by D\'efi Inria-EDF}}  
\date{}
\author{Ashok Krishnan K.S.\footnote{Inria and DI ENS, Paris, France. \emph{Email:~ashok-krishnan.komalan-sindhu@inria.fr}}, H\'el\`ene {Le Cadre}\footnote{Inria, Lille, France. \emph{Email:~helene.le-cadre@inria.fr}}, Ana Bu\v si\'c\footnote{Inria and DI ENS, Paris, France. \emph{Email:~ana.busic@inria.fr}}}

\maketitle
\begin{abstract}
This paper considers games where the utilities for agents are the sum of a term proportional to a social utility, and another term that is an  individual cost or reward. The agents are assumed to be irrational in their perception of the individual cost or reward. The multi equilibrium game is regularized, and its strictly concave potential function is used to select a unique equilibrium. This selected equilibrium is shown to be an $\epsilon-$equilibrium of the original game, where $\epsilon$ is parametrized by the regularizing function. A minorization-maximization based iterative learning scheme is proposed to learn equilibria in this game. This scheme converges to the potential-optimal equilibrium, and has superior convergence behaviour in comparison to gradient and best response methods.
\end{abstract}
\section{Introduction}\label{}
In many games and optimization problems, an agent’s payoff combines an individual cost with a reward proportional to a collective social utility. This captures scenarios where agents independently pursue a shared objective, making success dependent on coordination. Such interactions naturally form coordination games~\cite{cooper1999coordination} and, in some cases, aggregative games, where each agent responds to an aggregate of others’ actions~\cite{jensen2010aggregative}. Under mild conditions, they admit a (weighted) potential game structure, enabling equilibrium analysis via a global function~\cite{voorneveld1999potential}. Examples for scenarios where such problems arise, include energy markets, where electricity consumers utilize a shared resource while collectively aiming to meet a common carbon emission target~\cite{ashok2025achieving}, and the multi agent exploration problem~\cite{hao2023exploration}, where multiple agents try to reach a common target or destination, while moving in a coordinated but distributed fashion. A particular example of multi agent exploration is the rendezvous problem~\cite{cortes2017coordinated}, in which multiple robots, initially distributed across an unknown or unexplored environment, seek to reach a common destination while incurring individual energy costs. In all these cases, performance ultimately depends on the agents’ ability to coordinate through the effects of their collective actions.

While not all potential games correspond to meaningful social utilities, the closely related class of social purpose games~\cite{gilles2023emergent} explicitly decomposes utilities into a collective-benefit term and an individual cost, capturing a broad range of problems involving common or public resources.

From another perspective, the presence of a collective objective connects this framework to cooperative game theory~\cite{mccain2008cooperative}, where one studies how groups of agents generate and share value. In particular, when the underlying social utility exhibits supermodularity, one can leverage strong structural properties—such as increasing returns—to obtain existence, monotonicity, and robustness guarantees for equilibria (see, e.g.,~\cite{bach2013learning}). Finally, in the large-population regime, when the impact of any single agent on the aggregate becomes negligible, the model relates to mean-field games~\cite{lasry2007mean}. This approximation is attractive for analyzing asymptotic system behavior, but comes at a cost: by averaging interactions, it tends to wash out individual heterogeneity and may fail to capture nuanced agent preferences or cost perceptions.

In this paper, we consider such an additive utility model. However, we assume that agents are not fully rational in their perception of individual costs. When agents are human, this deviation from rationality may arise from behavioral biases~\cite{jallais2005allais}. For non-human agents, bounded rationality may instead stem from limitations in information, perspective, or computational capacity, which can hinder effective coordination. Several modeling paradigms have been proposed to capture such deviations from perfect rationality. Quantal best-response~\cite{mckelvey1995quantal} and noisy discrete choice models~\cite{alos2010logit} introduce stochasticity in decision-making, typically assuming that agents respond smoothly to payoff differences (e.g., via logit dynamics). These models are analytically tractable and well-suited for equilibrium learning, but they primarily capture random errors rather than systematic biases in perception. Distributionally robust game~\cite{aghassi2006robust} formulations, on the other hand, account for ambiguity in agents’ beliefs by considering worst-case distributions over uncertainties. While they provide strong robustness guarantees, they tend to reflect ambiguity aversion at the population level rather than individual-level cognitive distortions, and may lead to overly conservative behaviors.

In contrast, we adopt prospect theory (PT)~\cite{kai1979prospect}, which explicitly models systematic deviations from expected utility maximization by incorporating reference dependence, loss aversion, and probability distortion. Prospect theory addresses well-documented shortcomings of classical von Neumann–Morgenstern expected utility theory (EUT)~\cite{von1944theory}, whose assumption of fully rational agents is often contradicted by experimental evidence~\cite{jallais2005allais}. A variety of prospect-theoretic models exist~\cite{stott2006cumulative}, differing in how they model value functions and probability weighting, and often tailored to specific applications.

Many games, including those with prospect-theoretic utilities~\cite{ks2025irrationality}, admit multiple equilibria, complicating both analysis and learning. To address this, we introduce a regularized (perturbed) game, adding a small term to the utility or potential function. The rgularized potential function has a unique maximum, which selects a unique equilibrium while preserving the structure of the original game. This is analogous to regularization in machine learning to ensure unique maximizers. The regularizer thus acts as a weak equilibrium selection mechanism, linking the perturbed equilibrium to the original set of equilibria. Learning this single equilibrium is the next question of interest. The kind of algorithm that can be used for learning is intricately tied to the structure of the game. Monotone games and potential games show convergence under gradient type learning dynamics~\cite{rosen1965existence}. In the case of aggregative games, best response learning converges under strong assumptions on the structure of the interaction between the users' strategies and utilities~\cite{kukushkin2004best}. However, the performance of many learning algorithms worsens when the game is non smooth, i.e., the utilities are no longer continuously differentiable. To address  this challenge, we propose a minorization–maximization (MM)~\cite{hunter2004tutorial} based learning scheme. The  key idea is to replace the original  optimization problem with a sequence of surrogate problems whose optima converge to a chosen equilibrium. The required aggregate information is communicated through a coordinating agent, enabling scalable learning despite the presence of non-standard utility functions. We see that MM improves convergence speed while leveraging the game’s potential structure. Since it can be used to learn equilibria in games with smooth and non-smooth utilities, we see that it presents itself as an algorithm which can be used across a variety of applications ranging from  electricity markets to multi agent coordination.

Utilities reflecting PT preferences encode sensitive information (e.g., risk attitudes, reference points). Our framework requires only aggregate information, protecting individual preferences. The surrogate optimization in MM further supports privacy-preserving modifications, such as adding noise or using secure aggregation, without compromising convergence to the perturbed equilibrium.
\subsection{Literature Review}
Potential games~\cite{voorneveld1999potential} are a well studied class of games with enough structure to allow convergence under different types of learning. Results for existence of equilibria and convergence of leaning schemes for finite games and games with differentiable utilities are found in~\cite{monderer1996potential}. Convergence results for best and better response dynamics in near potential games, for games with finite strategy sets and mixed strategies, are provided in~\cite{candogan2011learning}.

Social purpose games~\cite{gilles2023emergent} are a subset of the class of games we study, since they additionally have the aggregative property. Moreover~\cite{gilles2023emergent} focuses on the emergence of cooperation in such games, whereas in this paper we focus on agent irrationality and learning to reach a specific equilibrium.

Prospect theory has been used in economics for many decades, in order to model decision making under uncertainty~\cite{holmes2011management}. Some works that use PT models in finite games include \cite{merrick2016modeling},\cite{vahid2019modeling},\cite{keskin2016equilibrium} and \cite{metzger2019non}. These works provide results for existence of Nash equilibria. In \cite{shalev2000loss}, a piecewise linear PT transformation is considered, and results are obtained for existence of equilibria. A broad survey of different papers that have used prospect theory in modelling economics of power systems is presented in \cite{gan2022application}. In \cite{fochesato2025noncooperative} existence results for local Nash equilibria are obtained under non smooth PT transformations. In \cite{ks2025irrationality}, the effect of PT transforms on the set of equilibria is characterized analytically, and the results applied to an electricity market. In \cite{ashok2025achieving}, an incentive scheme is developed for driving a game to a desirable equilibrium.
\subsection{Contributions}
The key contributions of this paper are as follows.
\begin{enumerate}
    \item We develop a game model that models agent utilities that combine collective rewards with an individual reward/cost that is perceived irrationally. This models a wide variety of applications that involve coordination between agents while also addressing individual bounded rationality. We use prospect theory to model agent irrationality.
    \item Using the potential structure of the regularized game, we obtain a unique equilibrium which is also the maximum of the potential function associated with the regularized game. Thus, regularization and potential formulation can be viewed together as an equilibrium selection method. We characterize theoretically how the selected equilibrium relates to the set of equilibria of the unregularized game, in terms of the regularizing function.

    \item In addition to gradient and best response dynamics, we propose a minorization-maximization based learning scheme to learn the equilibrium of the game. This is shown to converge to the potential-optimal equilibrium, and works well for even non smooth utilities. It is also faster than gradient based learning, and does not suffer from multiple fixed points like best response learning. 
\end{enumerate}
\subsubsection*{Structure of the paper}
The rest of the paper is structured as follows. In Section \ref{sec:prospect}, we provide a short overview of prospect theory, followed by a summary of the notation. Section \ref{sec:Game Model} introduces the game theoretic model. Section \ref{sec:Existence} provides results on the existence of Nash equilibria. We also relate the equilibria of the regularized and non regularized games. In Section \ref{sec:Learning}, we study different learning approaches to the game, separating smooth games from non smooth games. This is followed by numerical examples of the behaviour of the discussed algorithms in Section \ref{sec:Numerics}, followed by the conclusion.

\section{Overview of Prospect Theory}\label{sec:prospect}
In this section, we explain the PT framework, using ideas from works such as 
 \cite{kai1979prospect} and \cite{stott2006cumulative}. 
 An agent obtains random rewards $(R_1,...,R_M)$ with probabilities $(q_1,...,q_M)$ where $q_1+...+q_M=1$. We call $(R_1,q_1,...,R_M,q_M)$ a \emph{prospect}. Assume that the rewards are ordered in increasing order of attractiveness. Now consider an agent having to choose between two prospects, $P:=(R_1,q_1,...,R_M,q_M)$ and $\hat P:=(\hat R_1,\hat q_1,...,\hat R_M,\hat q_M)$. In the classical EUT setting, this choice is made by comparing the expected utilities of the prospects, 
 $$
 \sum_{j=1}^Mq_jR_j \lessgtr \sum_{j=1}^M\hat q_j\hat R_j
 $$
 and choosing the greater one.
 
 In prospect theory, the decision is made as follows. Each reward is viewed with respect to a (psychological) reference value. A reward which is greater than this reference is perceived as a gain, and a reward below this reference is understood as a loss.  When faced with gains, agents will behave in a \emph{risk averse} manner, and when faced with losses, agents tend to be \emph{risk seeking} or \emph{risk neutral}~\cite{kai1979prospect}. 
  This perception is captured by a value  function $V:\mathbb R\to\mathbb R$, as in Fig. \ref{fig:distortion-curve-val}. Here the agents are risk averse for gains and risk neutral for losses, with the reward zero representing the reference value that demarcates gains from losses.
  
\begin{figure}
    \centering
    \begin{tikzpicture}[scale=0.8]
\begin{axis}[
          xmax=5,ymax=4,
          xmin=-2,
          axis lines=middle,ticks=none]
\addplot[smooth,black,mark=none,
line width=1.5pt,domain=0:9.5,
samples=63]  {(1/log2(2.712))*log2(1+x)};
\addplot[smooth,black,mark=none,
line width=1.5pt,domain=-2:0,
samples=63]  {x};
\addplot[dotted,black,mark=none,
line width=1.5pt,domain=0:6,
samples=63]  {x};
\node at (axis cs:3,1.6) {$V$};
\node at (axis cs:4,-0.2) {reward};
\node at (axis cs:0,3) {perceived reward};
\end{axis}
\end{tikzpicture}
    \caption{Example of a prospect theoretic value function $V$ that maps rewards to perceptions. The dotted line is the unit slope line for reference.}
    \label{fig:distortion-curve-val}
\end{figure}

In addition to $V$, the probabilistic weight accorded to rewards is modified using a function $\pi$. The value of the prospect under PT is,
\begin{equation}\label{eqn:modified-value-PT}
    V=\sum_{j=1}^M \tilde q_j V(R_j),
\end{equation}
where $\tilde q_i$ is given by~\cite{stott2006cumulative},
\begin{align}
    \tilde q_1 &=\pi(q_1),\label{eq:q-tilde-gen-1}\\
    \tilde q_j &= \pi(\sum_{m=1}^j q_m)-\pi(\sum_{m=1}^{j-1} q_m),~j=2,...,M,\label{eq:q-tilde-gen-2}
\end{align}
for some monotone increasing $\pi:[0,1]\to[0,1]$. The map $\pi$ achieves the overweighting of small probabilities, and the underweighting of large probabilities, for large outcomes. Faced with a choice between two prospects, the agent compares their PT  values \eqref{eqn:modified-value-PT}, and chooses the better. Note that with $V(x)=x$ and $\pi(x)=x$, we retrieve the standard formulation of choice under the maximization of expected utilities.

\subsection*{Notation and symbols}
The set of real numbers is $\mathbb R$. The set of non negative real numbers is $\mathbb R_+$. For a vector $\x:=(x_1,...,x_K)$, $\x_{-i}$ denotes the same vector with the $i$th element removed, i.e., $(x_1,..,x_{i-1},x_{i+1},...,x_K)$. If a vector $\x$ is a game theoretic strategy vector then we use it interchangeably with the game theoretic notation $\x:=(x_i,\x_{-i})$. For a random variable $X$ with distribution $d$, $\mathbb{E}_d [X]$ denotes its expectation under the distribution $d$. The indicator function is denoted by $\textbf{1}$.

\section{Game Model}\label{sec:Game Model}
Let $\mathcal N:=\{1,2,...,N+1\}$ be the set of players. Player $N+1$ acts solely as a coordinator between players $1,2,...,N$. For $i=1,..,N$, player $i$ chooses its strategy $x_i$ from a convex and compact set $\mathcal X_i\subset \mathbb R$. Let $\mathcal X:=\prod_{j=1}^N \mathcal X_j$ denote the set of joint strategies. The randomness in outcomes is modelled by a real valued random variable $\xi$ taking values over the set  $\Xi:=\{\xi_1,...,\xi_M\}$\footnote{Most of the results in this work can be extended to the case where $\xi$ has continuous distribution, albeit with more conditions on other variables to ensure integrability. We restrict ourselves to a finite support for simplicity.} where $0<\xi_1<\cdots <\xi_M$, with probability distribution $\q=(q_1,...,q_M)$. Let $V_i:\mathbb R\to\mathbb R,i=1,...,N$ and $\pi:[0,1]\to[0,1]$ be monotone increasing. The utility for user $i$  under the strategy $\x=(x_i,\x_{-i})$ where  $\x_{-i}:=(x_1,..,x_{i-1},x_{i+1},...,x_N)$, is given by
	\begin{align}\label{eq:utility-fn-def}
		J_i^{\e}(x_i,\x_{-i}) =a_i\J^{\e}(\x)+\mathbb{E}_{\tilde\q} [V_i\circ\mathcal R_i(x_i,\xi)],
	\end{align}
	where $0<a_N\le a_{N-1}\le ...\le a_1$, 
	and $\tilde\q$ is the distribution generated from $\q$ by \eqref{eq:q-tilde-gen-1}-\eqref{eq:q-tilde-gen-2} using $\pi$. This form of the utility function includes the following terms:
	\begin{enumerate}
		\item A regularized collective   usage benefit $\J^{\e}(\x)=\J(\x)-\e H(\x)$, where $\J:\mathcal X\to\mathbb R$ is a  collective utility common to all agents, $H:\mathbb R^N\to\mathbb R_+$ is a non negative regularizing function, and $\e\ge 0$. 
		\item An individual random reward $\mathcal R_i(x_i,\xi):\mathcal X_i\times\Xi\to\mathbb R$. This reward is random due to its dependence on $\xi$, and is viewed subjectively through the PT value function $V_i$.
	\end{enumerate}
    We make the following assumptions on the constituents of the utility function.
    \begin{asu}\label{assumption:contin}
        The functions $\J$, $\{V_i\}_{i=1}^N$, $\{\mathcal R_i\}_{i=1}^N$ and $ H$ are continuous in each of their respective arguments.
    \end{asu}
    \begin{asu}\label{assumption:compact}
        The sets $\{\mathcal X_i\}_{i=1}^N$ are convex and compact.
        \end{asu}
    
    Under Assumptions \ref{assumption:contin}-\ref{assumption:compact}, the utility functions admit maximizers over $\mathcal X_i$, and the potential function $\Phi^\e$ admits a maximizer over $\mathcal X$.
    \begin{asu}\label{assumption:concave}
        The functions $\J, \{V_i\circ \mathcal R_i\}_{i=1}^N$ are concave in each of their respective arguments, while $H$ is non negative and strictly convex.
    \end{asu}
    \begin{defn}
        We define the $\e-$Game, $\G^{\e}$ as the tuple $(\mathcal N,$\ $ \{\mathcal X_j\}_{j=1}^{N},$ $\{J_j^{\e}\}_{j=1}^{N},$$\{V_j\}_{j=1}^N,\pi)$.
    \end{defn}

\begin{defn}
    An $\epsilon$-Nash equilibrium for $\G^{\e}$ is any $\x^*:=(x^*_i,\x_{-i}^*)=(x_1^*,...,x_N^*)\in\mathcal X$ such that for all $i=1,...,N$, 
	\begin{align}\label{eq:def-nash-eq}
		J_i^{\e}(x^*_i,\x^*_{-i})\ge J_i^{\e}(x,\x^*_{-i})-\epsilon,~\forall x\in\mathcal X_i,
	\end{align}
    for some positive $\epsilon$.
\end{defn}
When $\epsilon=0$, we refer to such points as Nash equilibria.
\begin{defn}
    The game $\G^{\e}$ is called \emph{smooth} if the utilities $J_i^{\e}$ are continuously differentiable in $x_i$, for each $i$.
\end{defn}

\section{Existence of Nash Equilibria}\label{sec:Existence}
It is easy to see that the game $\G^{\e}$ is a weighted potential game~\cite{monderer1996potential}.
\begin{lem}
    The game $\G^{\e}$ is a weighted potential game, with potential function
    \begin{align}
        \Phi^{\e}(\x)\triangleq 
        \mathcal J^{\e}(\x)+\sum_{j=1}^N\frac{1}{a_j}\mathbb{E}_{\tilde\q}[V_j\circ \mathcal R_j(x_j,\xi)].
    \end{align}
\end{lem}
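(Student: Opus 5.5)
We verify the defining property of a weighted potential game in the sense of~\cite{monderer1996potential}: we need positive weights $w_i$ such that for every player $i$, every $\x_{-i}$, and every pair $x_i,x_i'\in\mathcal X_i$,
\begin{align*}
J_i^{\e}(x_i,\x_{-i})-J_i^{\e}(x_i',\x_{-i}) = w_i\bigl(\Phi^{\e}(x_i,\x_{-i})-\Phi^{\e}(x_i',\x_{-i})\bigr).
\end{align*}
The natural choice, suggested by the $1/a_j$ factors in $\Phi^{\e}$, is $w_i=a_i>0$; positivity follows from $0<a_N\le\cdots\le a_1$.

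The verification is a direct computation using the additive structure of \eqref{eq:utility-fn-def}. First, multiply $\Phi^{\e}$ by $a_i$ to get $a_i\Phi^{\e}(\x)=a_i\J^{\e}(\x)+\mathbb{E}_{\tilde\q}[V_i\circ\mathcal R_i(x_i,\xi)]+\sum_{j\ne i}\frac{a_i}{a_j}\mathbb{E}_{\tilde\q}[V_j\circ\mathcal R_j(x_j,\xi)]$. The first two terms are exactly $J_i^{\e}(x_i,\x_{-i})$. The remaining sum depends only on $\x_{-i}$, since each $\mathcal R_j$ depends only on $x_j$ and $\xi$, and $\tilde\q$ is a fixed distribution generated from $\q$ and $\pi$, independent of the strategies. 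Hence $J_i^{\e}(\x)=a_i\Phi^{\e}(\x)-C_i(\x_{-i})$ for a function $C_i$ of the opponents' strategies alone. Evaluating this at $(x_i,\x_{-i})$ and $(x_i',\x_{-i})$ and subtracting cancels $C_i$, which gives the displayed identity with $w_i=a_i$.

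There is no real obstacle here. The only point that needs care is confirming that the PT-weighted expectation of player $j$'s individual reward does not depend on $x_i$ for $j\ne i$. This holds because the distortion weights $\tilde q_m$ in \eqref{eq:q-tilde-gen-1}--\eqref{eq:q-tilde-gen-2} are defined from $\q$ alone, using the ordering of the $\xi_m$ rather than the ordering of the rewards. I would state this explicitly, since under cumulative PT the weights can in general depend on how outcomes are ranked. All terms are finite sums of finite values, so no integrability issues arise.
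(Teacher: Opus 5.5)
Your proof is correct and takes the same approach as the paper, which simply asserts the identity \eqref{eq:pot-equal-util} with weights $a_i$; your decomposition $J_i^{\e}(\x)=a_i\Phi^{\e}(\x)-C_i(\x_{-i})$ is exactly the computation the paper leaves implicit. Your caveat about $\tilde\q$ is harmless but stronger than needed: even rank-dependent weights for player $j$ would depend only on $x_j$, so the cancellation would still go through.
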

\begin{proof}
    For all $i$, observe that for $\x=(x_1,...,x_N)=(x_i,\x_{-i})$ and $\tilde\x=(\tilde x_i,\x_{-i})$, we have
    \begin{align}\label{eq:pot-equal-util}
        J_i^{\e}(x_i,\x_{-i})-J_i^{\e}(\tilde x_i,\x_{-i})=a_i(\Phi^{\e}(\x)-\Phi^{\e}(\tilde \x)).
    \end{align}
\end{proof}
Define the maximum of the potential function and its set of maximizers,
\begin{align}
\Phi^{\e\star} &=\max_{\x\in\mathcal X}\Phi^{\e}(\x),\\
\label{def:gamma-set}
\Gamma^{\e} &= \arg_{\x\in\mathcal X}\max \Phi^{\e}(\x).
\end{align}

\begin{lem}
    The set $\Gamma^{\e}$ is a non empty and convex subset of the set of Nash equilibria of $\G^{\e}$. For $\e>0$, $|\Gamma^{\e}|=1$.
\end{lem}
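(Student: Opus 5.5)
The plan is to get all three claims from the weighted potential identity \eqref{eq:pot-equal-util} in the previous lemma, together with Assumptions \ref{assumption:contin}--\ref{assumption:concave}. Nonemptiness, convexity, and uniqueness follow from the Weierstrass theorem and concavity of $\Phi^{\e}$. The inclusion in the Nash set follows directly from \eqref{eq:pot-equal-util}.

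\emph{Nonemptiness.} By Assumption \ref{assumption:contin}, $\J$, $H$ and each $V_j\circ\mathcal R_j(\cdot,\xi)$ are continuous. The expectation $\mathbb E_{\tilde\q}$ is a finite weighted sum over $\Xi$, so $\Phi^{\e}$ is continuous on $\mathcal X$. By Assumption \ref{assumption:compact}, $\mathcal X$ is a finite product of compact convex sets, hence compact and convex. Weierstrass then gives $\Gamma^{\e}\neq\emptyset$.

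\emph{Inclusion in the Nash set.} Let $\x^*\in\Gamma^{\e}$, fix $i$, and take any $x\in\mathcal X_i$. Applying \eqref{eq:pot-equal-util} with $\tilde\x=(x,\x^*_{-i})$ gives
$J_i^{\e}(x_i^*,\x^*_{-i})-J_i^{\e}(x,\x^*_{-i})=a_i\bigl(\Phi^{\e}(\x^*)-\Phi^{\e}(x,\x^*_{-i})\bigr)\ge 0$,
since $a_i>0$ and $\x^*$ maximizes $\Phi^{\e}$. This is \eqref{eq:def-nash-eq} with $\epsilon=0$.

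\emph{Concavity of $\Phi^{\e}$, and the main obstacle.} Convexity of $\Gamma^{\e}$ and uniqueness both rest on $\Phi^{\e}$ being concave, and strictly concave when $\e>0$. The weights $\tilde q_j$ defined by \eqref{eq:q-tilde-gen-1}--\eqref{eq:q-tilde-gen-2} are nonnegative because $\pi$ is monotone increasing. Hence $\mathbb E_{\tilde\q}[V_j\circ\mathcal R_j(x_j,\xi)]$ is a nonnegative combination of concave functions of $x_j$, and so it is concave in $x_j$. Each such term depends on $x_j$ alone and is multiplied by $1/a_j>0$, so the sum over $j$ is jointly concave on $\mathcal X$. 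The term $-\e H$ is concave, and strictly concave when $\e>0$, because $H$ is strictly convex.

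The delicate point is $\J$. Assumption \ref{assumption:concave} says concave ``in each of their respective arguments,'' and separate concavity in each coordinate does not imply joint concavity. I will read the assumption for $\J$ as concavity on $\mathcal X\subset\mathbb R^N$, its argument being the vector $\x$, and state this reading explicitly. Under that reading, $\Phi^{\e}$ is concave on $\mathcal X$.

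\emph{Convexity and uniqueness.} Since $\Phi^{\e}$ is concave on the convex set $\mathcal X$, its set of maximizers is convex. Indeed, for $\x,\y\in\Gamma^{\e}$ and $t\in[0,1]$, $\Phi^{\e}(t\x+(1-t)\y)\ge t\Phi^{\e\star}+(1-t)\Phi^{\e\star}=\Phi^{\e\star}$. For $\e>0$, $\Phi^{\e}$ is the sum of a concave function and the strictly concave function $-\e H$, so it is strictly concave. If $\x\ne\y$ were both maximizers, the midpoint would give $\Phi^{\e}\bigl(\tfrac12(\x+\y)\bigr)>\Phi^{\e\star}$, a contradiction. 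Together with nonemptiness this gives $|\Gamma^{\e}|=1$.
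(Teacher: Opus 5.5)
Your proposal is correct and follows the same route as the paper's proof: the Weierstrass theorem gives nonemptiness, the weighted-potential identity \eqref{eq:pot-equal-util} gives inclusion in the Nash set, and (strict) concavity of $\Phi^{\e}$ gives convexity and uniqueness. Your check that the weights $\tilde q_j$ are nonnegative, and your explicit reading of Assumption~\ref{assumption:concave} as joint concavity of $\J$, spell out what the paper's one-line appeal to that assumption silently uses, and that reading is genuinely needed: with constant individual terms, $\J(\x)=x_1x_2$ on $[-1,1]^2$ (concave in each coordinate separately) and $H(\x)=||\x||^2$, any $0<\e<1/2$ gives $\Gamma^{\e}=\{(1,1),(-1,-1)\}$.
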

\begin{proof}
    Since $\Phi^{\e}$ is continuous (Assumption \ref{assumption:contin}) and $\mathcal X$ is compact (Assumption \ref{assumption:compact}), it follows by the Weierstrass theorem that $\Phi^{\e}$ attains its maximum, and hence $\Gamma^{\e}$ is non empty.
    Let $\x:=(x_i,\x_{-i})\in\Gamma^{\e}$. Let $\tilde\x:=(\tilde x_i,\x_{-i})$. Then
    \begin{align*}
        J_i^{\e}(x_i,\x_{-i})-J_i^{\e}(\tilde x_i,\x_{-i})=a_i(\Phi^{\e}(\x)-\Phi^{\e}(\tilde\x))\ge 0,
    \end{align*}
    using \eqref{eq:pot-equal-util} and since  $\x\in\Gamma^{\e}$. Since $\tilde x_i$ and $i$ are arbitrary, it follows that $\x$ is a Nash equilibrium for $\G^{\e}$. 
     From Assumption \ref{assumption:concave}, it follows that $\Gamma^{\e}$ is a convex set for all $\e$. When $\e>0$, we also see that $\Phi^{\e}$ is strictly concave, and therefore $|\Gamma^{\e}|=1$.
\end{proof}
From the fact that $\Gamma^{\e}$ is non empty, we immediately have the following  result.
\begin{cor}\label{thm:existence}
     The game $\G^{\e}$ admits Nash equilibria.
\end{cor}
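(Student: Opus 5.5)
The corollary follows directly from the preceding lemma, which states that $\Gamma^{\e}$ is non empty and contained in the set of Nash equilibria of $\G^{\e}$. The plan is therefore a two-step chaining argument rather than a fresh fixed-point argument.

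\textbf{Step 1: non-emptiness of $\Gamma^{\e}$.} By Assumption \ref{assumption:contin}, $\J$, $H$, each $V_j$ and each $\mathcal R_j$ are continuous. Since $\Xi$ is finite, the expectation $\mathbb{E}_{\tilde\q}[V_j\circ\mathcal R_j(x_j,\xi)]$ is a finite weighted sum of continuous functions of $x_j$. Hence $\Phi^{\e}$ is continuous on $\mathcal X$. By Assumption \ref{assumption:compact}, $\mathcal X$ is a finite product of compact sets and so is compact. The Weierstrass theorem then gives a maximizer $\x^\star\in\Gamma^{\e}$.

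\textbf{Step 2: $\x^\star$ is a Nash equilibrium.} Fix $i$ and any $x\in\mathcal X_i$. The potential identity \eqref{eq:pot-equal-util} gives
\[
J_i^{\e}(x_i^\star,\x^\star_{-i})-J_i^{\e}(x,\x^\star_{-i})=a_i\bigl(\Phi^{\e}(\x^\star)-\Phi^{\e}(x,\x^\star_{-i})\bigr)\ge 0 .
\]
The inequality holds because $a_i>0$ and $\x^\star$ maximizes $\Phi^{\e}$ over $\mathcal X$, which contains $(x,\x^\star_{-i})$. This is exactly \eqref{eq:def-nash-eq} with $\epsilon=0$, so $\x^\star$ is a Nash equilibrium of $\G^{\e}$.

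The argument holds for every $\e\ge 0$, including the unregularized case $\e=0$. It uses only continuity and compactness, so it needs neither Assumption \ref{assumption:concave} nor any strict convexity of $H$. Concavity matters only for convexity and uniqueness of $\Gamma^{\e}$, which this corollary does not claim. There is no substantive obstacle. The only point to check is that $\Phi^{\e}$ is continuous, which relies on the finite support of $\xi$ so that the $\tilde\q$-expectation is a finite sum.
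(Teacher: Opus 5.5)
Your proposal is correct and is essentially the paper's own argument: the paper deduces the corollary from the preceding lemma. That lemma's proof is exactly your Weierstrass step (continuity of $\Phi^{\e}$ on the compact set $\mathcal X$) followed by the potential identity \eqref{eq:pot-equal-util}, which shows that every element of $\Gamma^{\e}$ is a Nash equilibrium of $\G^{\e}$. Your remark that Assumption~\ref{assumption:concave} is not needed here also matches the paper, which uses concavity only for the convexity and uniqueness of $\Gamma^{\e}$.
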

      If $\G^{\e}$ is smooth, $\Gamma^{\e}$ will be identical to the set of Nash equilibria of $\G^{\e}$~\cite[Corollary to Theorem 1]{neyman1997correlated}; in non smooth games, it may not be so. Thus, for $\e>0$, if $\G^{\e}$ is smooth, it will follow that $\Gamma^{\e}$ contains the unique Nash equilibrium for the game $\G^{\e}$.
\subsection*{Characterizing Regularized Equilibria}
We relate the regularized potential maximizing equilibria of $\G^{\e}$ to the equilibria of  the unregularized game $\G^{0}$, in terms of the regularizing function $H$.
\begin{lem}\label{lem:eps-eq-reg}
    Let $\x^{\dagger}$ be the element of $\Gamma^0$ that minimizes the regularizer,  
    $$\x^{\dagger}=\arg_{\x\in\Gamma^0}\min H(\x).$$
    Then, any element $\x^{\e}=(x_i^{\e},\x_{-i}^{\e})$ of~$\Gamma^{\e}$ is an $\epsilon$-Nash equilibrium for $\G^0$, where
    $$\epsilon=\e a_1  H(\x^{\dagger}).$$
    Further, the potentials satisfy
    $$\Phi^0(\x^{\dagger})\ge \Phi^0(\x^{\e})\ge\Phi^0(\x^{\dagger})-\frac{\epsilon}{a_1}.$$
\end{lem}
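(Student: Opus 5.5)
The plan is to use only the relation $\Phi^{\e}=\Phi^0-\e H$ together with the weighted-potential identity \eqref{eq:pot-equal-util}. The first step is the potential sandwich. By definition, $\x^{\e}$ maximizes $\Phi^{\e}$ over $\mathcal X$. Since $\x^{\dagger}\in\mathcal X$, this gives $\Phi^0(\x^{\e})-\e H(\x^{\e})\ge \Phi^0(\x^{\dagger})-\e H(\x^{\dagger})$. Because $H\ge 0$, we may drop the term $\e H(\x^{\e})$, which yields $\Phi^0(\x^{\e})\ge \Phi^0(\x^{\dagger})-\e H(\x^{\dagger})=\Phi^0(\x^{\dagger})-\epsilon/a_1$. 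The upper bound $\Phi^0(\x^{\dagger})\ge\Phi^0(\x^{\e})$ is immediate, since $\x^{\dagger}\in\Gamma^0$ maximizes $\Phi^0$. Existence of $\x^{\dagger}$ also needs a brief justification. $\Gamma^0$ is nonempty and convex by the previous lemma, and it is closed because it is the argmax set of a continuous function on a compact set. The continuous function $H$ therefore attains its minimum on $\Gamma^0$, and this minimizer is unique by strict convexity of $H$.

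The second step is the $\epsilon$-equilibrium property. Fix $i$ and $x\in\mathcal X_i$, and let $\tilde\x=(x,\x^{\e}_{-i})$. By \eqref{eq:pot-equal-util} applied with $\e=0$,
\begin{align*}
J_i^0(x,\x^{\e}_{-i})-J_i^0(x^{\e}_i,\x^{\e}_{-i})=a_i\bigl(\Phi^0(\tilde\x)-\Phi^0(\x^{\e})\bigr).
\end{align*}
We bound the right-hand side by $\Phi^0(\tilde\x)\le\Phi^0(\x^{\dagger})$, which holds because $\x^{\dagger}$ is a global maximizer of $\Phi^0$. Combining this with the sandwich gives $\Phi^0(\tilde\x)-\Phi^0(\x^{\e})\le \Phi^0(\x^{\dagger})-\Phi^0(\x^{\e})\le \e H(\x^{\dagger})$. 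Multiplying by $a_i\le a_1$ then gives the deviation gain bound $a_1\e H(\x^{\dagger})=\epsilon$, which is exactly \eqref{eq:def-nash-eq} for $\G^0$.

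The argument has no real obstacle. The point that needs some care is the choice in the second step: we compare against the global maximum $\Phi^0(\x^{\dagger})$ rather than try to control unilateral deviations directly through $\Phi^{\e}$. Doing the latter would bring in the uncontrolled difference $H(\tilde\x)-H(\x^{\e})$. The rest is checking that the ordering $a_i\le a_1$ gives a bound uniform in $i$.
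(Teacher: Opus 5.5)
Your proof is correct and follows essentially the paper's argument. You combine $\Phi^{\e}(\x^{\e})\ge\Phi^{\e}(\x^{\dagger})$ with $H\ge 0$ to get $\Phi^0(\x^{\e})\ge\Phi^0(\x^{\dagger})-\e H(\x^{\dagger})$, then bound unilateral deviations using the potential identity \eqref{eq:pot-equal-util}, the global optimality of $\x^{\dagger}$ for $\Phi^0$, and $a_i\le a_1$. You also justify that $\x^{\dagger}$ exists and is unique, since $\Gamma^0$ is nonempty, compact and convex and $H$ is continuous and strictly convex; this is a small step the paper leaves implicit.
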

\begin{proof}
    Since $\Phi^{\e}=\Phi^0-\e H$ with $H$ non negative, we have
    \begin{align}\label{eq:phi-0-e-bd1}
        \Phi^{0}(\x^{\e})\ge \Phi^{\e}(\x^{\e}).
    \end{align}
    Since $\x^{\e}\in\Gamma^{\e}$,
    \begin{align}\label{eq:phi-0-e-bd2}
        \Phi^{\e}(\x^{\e})\ge \Phi^{\e}(\x^{\dagger})=\Phi^0(\x^{\dagger})-\e H(\x^{\dagger}).
    \end{align}
    Now we have
    \begin{align*}
        J_i^0(x_i^{\e},\x_{-i}^{\e})-J_i^0(x,\x_{-i}^{\e}) &=a_i(\Phi^0(\x^{\e})-\Phi^0(x,\x_{-i}^{\e})),\\
        &\ge a_i(\Phi^0(\x^{\dagger})-\Phi^0(x,\x_{-i}^{\e})\\ &~~~~-\e H(\x^{\dagger})),\\
        &\ge -\epsilon,
    \end{align*}
    where we used \eqref{eq:phi-0-e-bd1},\eqref{eq:phi-0-e-bd2} and the fact that $\x^{\dagger}\in\Gamma^0$. Thus we see that $\x^{\e}$ is an $\epsilon-$equilibrium for the game $\G^0$. The relationship between the potential  values also follows from the definition of $\x^{\dagger}$, \eqref{eq:phi-0-e-bd1} and \eqref{eq:phi-0-e-bd2}.
\end{proof}
By choosing an appropriate $H$, one can choose which equilibrium of $\Gamma^0$ is close to the new regularized equilibrium. Here the closeness is in terms of the potential function $\Phi^0$. For a given $H$, we see an effect similar to the equilibrium selection  problem 
$$
    \min H(\x)~ s.t.\x\in\Gamma^0,
$$
where $\Gamma^0$ is from \eqref{def:gamma-set}.
Among all elements $\x$ of $\Gamma^0$, $\x^{\dagger}$ minimizes a lower bound on the difference between $\Phi^0(\x)$ and $\Phi^0(\x^{\e})$. The utility form in \eqref{eq:utility-fn-def} therefore can be viewed as a relaxed form of the equilibrium selection problem.

\section{Learning Dynamics for Nash Equilibria}\label{sec:Learning}
There are multiple approaches to learning equilibria in a noncooperative game, which have been explored in the literature. Two important methods are {gradient ascent/play} and {iterative best response}. 
\begin{defn}
    Gradient ascent/play (GA) is given by the sequence of joint strategies $\x(0),\x(1),...$ with $\x(k):=[x_1(k),\cdots,x_N(k)]$ given by
    $$x_i(k)=x_i(k-1)+\Delta_i(k)\frac{\partial J_i^{\e}(x_i,\x_{-i})}{\partial x_i}{|}_{\x=\x_{k-1}},$$
with $\Delta_i(k)>0$ being an appropriate step size.
\end{defn}
\begin{defn}\label{def:IBR}
    Iterative best response (IBR) is given by the sequence of joint strategies $\x(0),\x(1),...$ where the vector $\x(k):=[x_1(k),  \cdots,x_N(k)]$ is given by $$\x(k+1)=(x_1(k+1),...,x_N(k+1)),$$ where
$$x_i({k+1})=\mathscr B_i^{\e}\left(x_1(k+1),...,x_{i-1}(k+1),x_i(k),...,x_{N}(k)\right),$$
with $\mathscr B_i^{\e}:\mathcal X\to \mathcal X_i$ being the best response map given by
    $$
        \mathscr B_i^{\e}(\x)=\arg_{y\in\mathcal X_i}\max J_i^{\e}(y,\x_{-i}).
    $$
\end{defn}
The maximizer exists due to continuity of $J_i^{\e}$ and compactness of $\mathcal X_i$.

For games which have continuously differentiable utilities, strong results can be shown regarding the convergence of these approaches.

\subsection*{Convergence of Learning Dynamics in Smooth Games}
If $\G^{\e}$ is smooth, both gradient play and iterative best response can be shown to be converge to an $\epsilon-$Nash equilibrium of the unregularized game $\G^0$.
\begin{thm}\label{thm:grad-conv}
    Let $\e>0$ and let $\G^{\e}$ be a smooth game. Then, gradient play
    converges to the unique Nash equilibrium of $\G^{\e}$. This equilibrium is also an $\epsilon-$Nash equilibrium of $\G^0$.
\end{thm}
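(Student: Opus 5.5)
The argument has two parts. First, gradient play on $\G^{\e}$ is a scaled projected gradient ascent on the strictly concave potential $\Phi^{\e}$, so it converges to the unique maximizer. Second, that maximizer is the unique Nash equilibrium of $\G^{\e}$, and Lemma~\ref{lem:eps-eq-reg} shows it is an $\epsilon$-equilibrium of $\G^0$.

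\textbf{Step 1: rewrite the dynamics as gradient ascent on $\Phi^{\e}$.} Since $\G^{\e}$ is smooth, differentiating \eqref{eq:pot-equal-util} in $x_i$ gives
\begin{align*}
\frac{\partial J_i^{\e}}{\partial x_i}(\x)=a_i\frac{\partial \Phi^{\e}}{\partial x_i}(\x).
\end{align*}
The GA update is therefore $x_i(k)=x_i(k-1)+\Delta_i(k)a_i\,\partial_i\Phi^{\e}(\x(k-1))$. Because the strategy sets $\mathcal X_i$ are compact, I read this update as projected onto $\mathcal X_i$; this is implicit in the definition. It is then a projected gradient ascent on $\Phi^{\e}$ with diagonal preconditioner $D_k=\mathrm{diag}(\Delta_i(k)a_i)$.

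\textbf{Step 2: convergence of preconditioned projected gradient ascent.} By Assumption~\ref{assumption:concave} and $\e>0$, $\Phi^{\e}$ is strictly concave on the compact convex set $\mathcal X$. Its maximizer $\x^{\e}$ is unique, since $|\Gamma^{\e}|=1$. I plan to choose the step sizes in one of two ways.
\begin{itemize}
\item[(a)] If $\nabla\Phi^{\e}$ is $L$-Lipschitz on $\mathcal X$, take constant steps with $\Delta_i a_i<2/L$. The ascent lemma then gives $\Phi^{\e}(\x(k))$ nondecreasing, with a strict increase unless $\x(k)$ is a fixed point. Fixed points of the projected map satisfy the variational inequality, i.e.\ they are maximizers of the concave $\Phi^{\e}$. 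Compactness plus monotone bounded potential values then give convergence of every limit point to $\x^{\e}$, hence of the whole sequence.
\item[(b)] Otherwise, use diminishing steps with $\sum_k\Delta_i(k)=\infty$ and $\sum_k\Delta_i(k)^2<\infty$, taken proportional across players so that the preconditioner is $\delta_k\,\mathrm{diag}(a_i)$. The gradient is bounded on compact $\mathcal X$ by continuity. The standard argument in the weighted norm $\|\cdot\|_{D^{-1}}$ gives
\begin{align*}
\|\x(k)-\x^{\e}\|^2\le\|\x(k-1)-\x^{\e}\|^2-2\delta_k\bigl(\Phi^{\e}(\x^{\e})-\Phi^{\e}(\x(k-1))\bigr)+C\delta_k^2 .
\end{align*}
This yields $\liminf\Phi^{\e}(\x(k))=\Phi^{\e\star}$. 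Strict concavity and compactness then upgrade this to $\x(k)\to\x^{\e}$.
\end{itemize}

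\textbf{Step 3: identify the limit.} By the remark after Corollary~\ref{thm:existence}, in a smooth game $\Gamma^{\e}$ coincides with the set of Nash equilibria. Hence $\x^{\e}$ is the unique Nash equilibrium of $\G^{\e}$. Lemma~\ref{lem:eps-eq-reg} then gives directly that $\x^{\e}$ is an $\epsilon$-Nash equilibrium of $\G^0$ with $\epsilon=\e a_1H(\x^{\dagger})$.

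\textbf{Main obstacle.} Step 2 is the delicate part, because "appropriate step size" has to be made concrete. Two issues arise. The players' step sizes must be coordinated, via the common scalar $\delta_k$ times the fixed weights $a_i$, so that the preconditioner is a fixed metric and the weighted-norm Lyapunov argument goes through. And continuous differentiability alone gives no global Lipschitz gradient, so I would rely on the diminishing-step argument (b), which needs only bounded gradients on $\mathcal X$.
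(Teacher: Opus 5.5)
Your proof is correct, but it takes a different route from the paper. The paper never works with the discrete iteration. It asserts that the game satisfies Rosen's diagonal strict concavity, cites Rosen for uniqueness and for convergence of ``the gradient dynamics''. Then, as in your Step 3, it identifies the equilibrium with the point of $\Gamma^{\e}$ and applies Lemma~\ref{lem:eps-eq-reg}. You instead use the weighted-potential identity to recast gradient play as projected gradient ascent on the strictly concave $\Phi^{\e}$, in the fixed metric $\|\cdot\|_{A^{-1}}$ with $A=\mathrm{diag}(a_i)$, and you prove convergence from first principles with explicit step-size rules.

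What the paper's route buys is brevity: the cited machinery applies to any game satisfying diagonal strict concavity, potential or not. Its costs are these:
\begin{itemize}
\item Rosen's convergence theorem is for his continuous-time gradient-projection system, not the discrete iteration.
\item ``Appropriate step size'' is never specified.
\item Diagonal strict concavity does not follow from individual strict concavity of the $J_i^{\e}$, as the paper suggests. Here it holds with Rosen's weights $r_i=1/a_i$, for which the weighted pseudo-gradient is exactly $\nabla\Phi^{\e}$.
\end{itemize}
Your argument handles the discrete, projected iteration as actually defined. It also makes visible the matching requirement that the players' step sizes keep fixed ratios, so that the Lyapunov metric does not drift.

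Two details to tighten.
\begin{itemize}
\item Nonexpansiveness of the projection in the $A^{-1}$-norm holds because $\mathcal X$ is a product of intervals. The weighted projection therefore coincides with coordinatewise clipping, which is the update you actually run. You should state this.
\item In (b), strict concavity and compactness only give $\x(k_j)\to\x^{\e}$ along a subsequence. Full convergence comes from your displayed inequality itself: since its middle term is nonpositive, it gives $\|\x(k)-\x^{\e}\|_{A^{-1}}^2\le\|\x(k-1)-\x^{\e}\|_{A^{-1}}^2+C\delta_k^2$ with $\sum_k\delta_k^2<\infty$. So the distance converges, and by the subsequence it converges to $0$.
\end{itemize}
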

\begin{proof}
    Since the utilities are continuously differentiable and strictly concave, the game satisfies the diagonal strict concavity condition of Rosen, which implies that the associated pseudo-gradient mapping is strictly monotone~\cite{rosen1965existence}. In strictly monotone games, there is a unique Nash equilibrium~\cite[Theorem 2]{rosen1965existence} and the gradient dynamics converges to it~\cite[Theorem 10]{rosen1965existence}. The unique Nash equilibrium is also the unique element of $\Gamma^{\e}$; hence it is an $\epsilon-$Nash equilibrium of $\G^0$, with  $\epsilon$ given by Lemma \ref{lem:eps-eq-reg}.
\end{proof}
Note that since for all $i$,
$$
    a_i\frac{\partial \Phi^{\e}(\x)}{\partial x_i}=\frac{\partial J^{\e}(x_i,\x_{-i})}{\partial x_i},
$$
gradient descent on the utilities and on the potential function are equivalent up to a scaling factor.

The convergence of iterative best response can be shown similar to the convergence of coordinate gradient schemes~\cite[Proposition 2.7.1]{bertsekas1999nonlinear}.
\begin{thm}\label{thm:best-resp-conv}
    Let $\e>0$ and let $\G^{\e}$ be a smooth game. Then, iterative  best response iterates $\x(k)$ converges to the unique Nash equilibrium of $\G^{\e}$, which is also an $\epsilon-$Nash equilibrium of $\G^0$.
\end{thm}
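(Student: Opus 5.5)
The plan is to treat iterative best response as exact block-coordinate ascent on the potential $\Phi^{\e}$ and then argue as in \cite[Proposition 2.7.1]{bertsekas1999nonlinear}.

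By \eqref{eq:pot-equal-util}, for fixed $\x_{-i}$ the maps $y\mapsto J_i^{\e}(y,\x_{-i})$ and $y\mapsto a_i\Phi^{\e}(y,\x_{-i})$ differ only by a constant, and $a_i>0$. Hence
$$\mathscr B_i^{\e}(\x)=\arg_{y\in\mathcal X_i}\max \Phi^{\e}(y,\x_{-i}).$$
So one sweep of Definition \ref{def:IBR} is one cycle of Gauss--Seidel coordinate maximization of $\Phi^{\e}$ over the product set $\mathcal X=\prod_j\mathcal X_j$.

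For $\e>0$, Assumption \ref{assumption:concave} makes $\Phi^{\e}=\Phi^0-\e H$ strictly concave, since $H$ is strictly convex. Its restriction to each coordinate is therefore strictly concave on the compact interval $\mathcal X_i$. Each coordinate maximizer thus exists and is unique, which is exactly the hypothesis of the cited proposition.

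The argument then has three steps.

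First, monotonicity. Each coordinate update does not decrease $\Phi^{\e}$, so $\Phi^{\e}(\x(k))$ is nondecreasing. It is bounded above by $\Phi^{\e\star}$, so it converges.

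Second, the limit point is optimal. By compactness of $\mathcal X$, take a convergent subsequence $\x(k_j)\to\bar\x$. I would show $\bar\x$ is coordinatewise optimal, meaning $\Phi^{\e}(\bar\x)\ge\Phi^{\e}(y,\bar\x_{-i})$ for all $y\in\mathcal X_i$ and all $i$.
\begin{itemize}
\item The key sub-step is to show that the intermediate points of a sweep converge to the same $\bar\x$, i.e.\ $x_1(k_j+1)-x_1(k_j)\to 0$. Suppose instead that the step $x_1(k_j+1)-x_1(k_j)$ has a subsequence bounded away from zero.
\item Normalize the step direction, extract a limit, and use concavity along the segment from $x_1(k_j)$ to $x_1(k_j+1)$.
\item This produces a point strictly between two limit points whose value strictly exceeds $\lim\Phi^{\e}(\x(k))$, by strict concavity. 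That contradicts the convergence of the monotone sequence.
\item Then pass to the limit in the optimality of $x_1(k_j+1)$ using continuity of $\Phi^{\e}$ (Assumption \ref{assumption:contin}), and repeat for $i=2,\dots,N$.
\end{itemize}

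Third, identify the limit. In a smooth game, coordinatewise optimality gives the variational inequality $\partial_i\Phi^{\e}(\bar\x)(y-\bar x_i)\le 0$ for each $i$. Summing over $i$ yields the first-order optimality condition for the concave function $\Phi^{\e}$ over the product set $\mathcal X$. Hence $\bar\x\in\Gamma^{\e}$, and differentiability is used exactly here. Equivalently, $\bar\x$ is a Nash equilibrium of $\G^{\e}$, which is unique by the remark following Corollary \ref{thm:existence}.

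Since every limit point equals the unique element of $\Gamma^{\e}$ and $\mathcal X$ is compact, the whole sequence converges to it. Lemma \ref{lem:eps-eq-reg} then gives the $\epsilon$-Nash property for $\G^0$ with $\epsilon=\e a_1H(\x^\dagger)$.

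The main obstacle is the second step, specifically showing that consecutive intermediate iterates collapse to the same limit. Strict concavity of $\Phi^{\e}$, which comes from $\e>0$, is essential there. Without it, IBR can cycle among distinct fixed points.
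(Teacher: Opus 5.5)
Your proposal is correct and follows essentially the same route as the paper: you treat best response as exact coordinate ascent on $\Phi^{\e}$ in the spirit of \cite[Proposition 2.7.1]{bertsekas1999nonlinear}, use monotonicity of the potential, show a limit point is coordinatewise optimal, use smoothness to upgrade this to membership in $\Gamma^{\e}$, and then invoke Lemma~\ref{lem:eps-eq-reg}. You also supply two steps the paper leaves implicit: the strict-concavity argument that the intermediate sweep iterates $\w^1(k_j)$ converge to the same limit (the paper's ``it can be shown''), and the passage from ``every limit point lies in the singleton $\Gamma^{\e}$'' to convergence of the whole sequence.
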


The iterative best response dynamics can be interpreted as a coordinate ascent method on the potential function $\Phi^{\e}$, where each agent sequentially updates its strategy to maximize the potential with respect to its own variable.

\begin{proof}
    Consider the iterative best response process as given in Definition \ref{def:IBR}. Define the strategy variables $\w^i(k)=(w^i_1(k),\cdots,w^i_N(k))=(w^i_j(k),\w^i_{-j}(k))\in\mathcal X$ given by $\w^0(k)=\x(k)$ and 
        
        $$
        \w^i(k)=\left(\mathscr B_i^{\e}(\w^{i-1}(k)),\w_{i-1}(k)\right),i=1,...,N.$$
        Clearly, $\x(k+1)=\w^N(k)$, and from the definition of best response, we have
        $$J_i^{\e}(w_i^i(k),\w^i_{-i}(k))\ge J_i^{\e}(w^{i-1}_i(k),\w^{i-1}_{-i}(k)),~i=1,...,N,$$
        which implies, by \eqref{eq:pot-equal-util} that
    $$\Phi^{\e}(\w^i(k))\ge \Phi^{\e}(\w^{i-1}(k)),~i=1,...,N.$$
    Let $\hat\x$ be a limit point of the sequence $\x(k)$. Such a limit point exists since $\mathcal X$ is compact. Since $\x(k)\in\mathcal X$ which is closed, it follows that $\hat\x\in\mathcal X$. Due to the fact that $\Phi^{\e}(\x(k))$ increases monotonically, we have $\Phi^{\e}(\x(k))\to\Phi^{\e}(\hat\x)$. Let $\x(k_j)$ be a subsequence of $\x(k)$ that converges to $\hat\x$. It can be shown that $\w^1(k_j)$ also converges to $\hat\x$.
    Since
    $$\w^1(k_j)=(\mathscr B_1(\x(k_j)),\x_{-1}(k_j)),$$ we have
    $$J_1^{\e}(\w^1(k_j))\ge J_1^{\e}(x,x_2(k_j),...,x_N(k_j))~\forall x\in\mathcal X_1,$$
    taking the limit along the subsequence $k_j\to\infty$, we have
    $$J_1^{\e}(\hat\x)\ge J_1^{\e}(x,\hat x_2,...,\hat x_N)~\forall x\in\mathcal X_1,$$
    which implies, by \eqref{eq:pot-equal-util},
    $$\Phi^{\e}(\hat\x)\ge \Phi^{\e}(x,\hat\x_{-1})~\forall x\in\mathcal X_1.$$
    By similar arguments for the other variables $\w^i(k)$ for $i=2,..,N,$ we see that
    $$\Phi^{\e}(\hat\x)\ge \Phi^{\e}(x,\hat\x_{-i})~\forall x\in\mathcal X_i,~i=1,...,N.$$
Since the $\Phi^{\e}$ are smooth and concave in each component, this implies~\cite[Proposition 2.1.2]{bertsekas1999nonlinear} that for each $i$, 
$$(x-\hat x_i)\frac{\partial \Phi^{\e}}{\partial x_i}(\hat\x)\le 0~\forall x\in\mathcal X_i$$
and hence,
$$\sum_{i=1}^N(x_i-\hat x_i)\frac{\partial \Phi^{\e}}{\partial x_i}(\hat\x)\le 0~\forall \x\in\mathcal X $$
which implies~
\cite[Proposition 2.1.2]{bertsekas1999nonlinear} that $\hat\x\in\Gamma^{\e}$. The rest of the result follows by applying Lemma \ref{lem:eps-eq-reg}.
\end{proof}
\subsection*{Convergence of Learning Dynamics in Non Smooth Games}
 In a non smooth setting, we can show that iterative best response converges to a Nash equilibrium.

\begin{thm}
    Let $\x(k)$ be iteratively generated by the best response iteration as described above, in the game $\G^{\e}$ for some $\e>0$. Then, $\x(k)$ converges to a Nash equilibrium of $\G^{\e}$.
\end{thm}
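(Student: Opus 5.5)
We reuse the skeleton of the proof of Theorem \ref{thm:best-resp-conv}, replacing the smoothness-based final step with a limit argument that needs only continuity and (strict) concavity. Define the intermediate points $\w^i(k)$ as there, with $\w^0(k)=\x(k)$ and $\w^N(k)=\x(k+1)$. For $\e>0$ the potential $\Phi^{\e}=\Phi^0-\e H$ is strictly concave, so each single-coordinate map $y\mapsto \Phi^{\e}(y,\x_{-i})$ is strictly concave on the compact interval $\mathcal X_i$. Hence $\mathscr B_i^{\e}$ is single valued, and by Berge's maximum theorem (continuity of $J_i^{\e}$, compactness of $\mathcal X_i$) it is continuous on $\mathcal X$. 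By \eqref{eq:pot-equal-util}, $\Phi^{\e}(\w^i(k))\ge\Phi^{\e}(\w^{i-1}(k))$. So $\Phi^{\e}(\x(k))$ is nondecreasing and bounded above by $\Phi^{\e\star}$, and therefore converges to some limit $\bar\Phi$.

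Next, take any limit point $\hat\x$ of $\x(k)$, which exists by compactness, along a subsequence $\x(k_j)\to\hat\x$. This is the step the earlier proof only asserted, and I expect it to be the main obstacle: showing $\w^1(k_j)\to\hat\x$. Continuity of $\mathscr B_1^{\e}$ gives $\w^1(k_j)\to(\mathscr B_1^{\e}(\hat\x),\hat\x_{-1})=:\hat\w^1$. Continuity of $\Phi^{\e}$ together with the monotone sandwich $\Phi^{\e}(\x(k_j))\le\Phi^{\e}(\w^1(k_j))\le\Phi^{\e}(\x(k_j+1))$ yields $\Phi^{\e}(\hat\w^1)=\bar\Phi=\Phi^{\e}(\hat\x)$. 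Since $\hat\w^1$ is the unique maximizer of $\Phi^{\e}(\cdot,\hat\x_{-1})$ and $\hat\x$ attains the same value, strict concavity forces $\hat x_1=\mathscr B_1^{\e}(\hat\x)$, i.e. $\hat\w^1=\hat\x$. Inducting over $i=2,\dots,N$ in the same way shows that $\w^i(k_j)\to\hat\x$ and $\hat x_i=\mathscr B_i^{\e}(\hat\x)$ for every $i$. This is exactly the statement that $\hat\x$ is a Nash equilibrium of $\G^{\e}$: each $\hat x_i$ is a best response to $\hat\x_{-i}$.

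Finally, we upgrade subsequential convergence to convergence of the whole sequence. In the non smooth case a Nash equilibrium need not lie in $\Gamma^{\e}$, so we cannot simply invoke uniqueness of the potential maximizer. Instead, the argument above shows that every limit point is a fixed point of the composed best-response map $T$, defined by $\x(k+1)=T(\x(k))$, and every limit point $\hat\x$ satisfies $\Phi^{\e}(\hat\x)=\bar\Phi$. Since $T$ is continuous and $\Phi^{\e}(T\x)>\Phi^{\e}(\x)$ unless $\x$ is a fixed point (by strict concavity), we would attempt to show that $\|\x(k+1)-\x(k)\|\to0$. The route is a compactness/contradiction argument: if some subsequence has $\|\x(k_j+1)-\x(k_j)\|\ge\delta$, pass to a further subsequence with $\x(k_j)\to\hat\x$; then $\x(k_j+1)=T(\x(k_j))\to T(\hat\x)=\hat\x$, which is a contradiction. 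The limit set is therefore a connected compact set of fixed points lying on the level set $\{\Phi^{\e}=\bar\Phi\}$.

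To conclude, we need this connected set to be a single point. Here we would use concavity of $\Phi^{\e}$: the set of coordinatewise-maximal points at level $\bar\Phi$ is a connected subset of a strictly convex superlevel set's boundary. If uniqueness fails in general, we would settle for the statement that $\x(k)$ converges to the set of Nash equilibria, and note that this is the weakest point of the plan.
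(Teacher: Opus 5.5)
You follow the paper's argument in your first two paragraphs, and you do it more carefully. The paper's proof reruns the skeleton of Theorem~\ref{thm:best-resp-conv} to get $J_i^{\e}(\hat\x)\ge J_i^{\e}(x,\hat\x_{-i})$ at every limit point $\hat\x$, then concludes $\hat\x\in\mathscr B^{\e}(\hat\x)$. The paper only asserts $\w^i(k_j)\to\hat\x$; you prove it. Uniqueness of each best response (from strict concavity of $\Phi^{\e}$ along coordinate lines), Berge continuity of $\mathscr B_i^{\e}$, and the sandwich $\Phi^{\e}(\x(k_j))\le\Phi^{\e}(\w^i(k_j))\le\Phi^{\e}(\x(k_j+1))$ do the work. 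Your asymptotic-regularity step is also correct. So you have shown that the limit set is a compact connected set of Nash equilibria inside $\{\Phi^{\e}=\bar\Phi\}$. The paper goes straight from ``every limit point is a Nash equilibrium'' to ``the result follows''. You are right that this is not convergence of the whole sequence, but your attempt to close that gap does not succeed.

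\textbf{The gap is your last step.} You need that a connected set of coordinatewise maximizers on one level set of a strictly concave function is a single point. This is false for $N\ge 3$: the boundary of a strictly convex superlevel set contains arcs, and a non smooth potential can have equilibria filling them. Take the three-player version of Example~\ref{ex:non-conv-bestresp}, $\Phi^{\e}(\x)=5-|x_1+x_2+x_3|-\e\|\x\|^2$ on $[-100,100]^3$. Every $\x$ with $x_1+x_2+x_3=0$ and $\max_i|x_i|\le 1/(2\e)$ is a Nash equilibrium. On this set $\Phi^{\e}=5-\e\|\x\|^2$, so for $c$ near $5$ the equilibria on $\{\Phi^{\e}=c\}$ form a whole circle. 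Best response still reaches an equilibrium after finitely many updates in this example, so this refutes your lemma, not the theorem.

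What you have proved, and what the paper's proof proves, is that every limit point is an equilibrium and that the distance from $\x(k)$ to the equilibrium set tends to $0$. For $N\ge 3$, convergence of $\x(k)$ itself would need an argument controlling how the iterates drift along such continua, not just the geometry of the fixed-point set.

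For $N=2$ your plan does go through, assuming $\mathcal X$ has nonempty interior (otherwise the problem is one-dimensional). Let $\hat\x$ be an equilibrium with $c=\Phi^{\e}(\hat\x)<\Phi^{\e\star}$, and let $S_c=\{\x\in\mathcal X:\Phi^{\e}(\x)\ge c\}$.
\begin{enumerate}
\item Suppose $S_c$ had a unique supporting line at $\hat\x$. Then its tangent cone there is a half-plane, with one of the coordinate directions $\pm e_1,\pm e_2$ in its interior; call it $e$.
\item Hence $\hat\x+te\in S_c$ for small $t>0$. Strict concavity along that coordinate line then contradicts $\hat x_i=\mathscr B_i^{\e}(\hat\x)$.
\item So $\hat\x$ is a corner of the planar convex set $S_c$. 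A planar convex set has only countably many corners, and a connected countable set is a single point.
\end{enumerate}
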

\begin{proof}
    Proceeding as in the proof of Theorem \ref{thm:best-resp-conv}, we show that for all $i$,
    $$J_i^{\e}(\hat\x)\ge J_i^{\e}(x,\hat\x_{-i})~\forall x\in\mathcal X_i,$$
    which implies that for all $i$,
    $$\hat x_i\in \mathscr B_i^{\e}(\hat\x),$$
    and hence
    $$\hat\x\in\mathscr B^{\e}(\hat\x),$$
    where $\mathscr B^{\e}$ is the global best response map is given by
$$
    \mathscr B^{\e}(\x):=[\mathscr B_1^{\e}(\x) \cdots \mathscr B_N^{\e}(\x)].
$$
Since any solution to this fixed point equation is a Nash equilibrium of $\G^{\e}$, the result follows.
\end{proof}
Unlike in the smooth case, the limit point need not be in the set $\Gamma^{\e}$. An example is presented below.
\begin{exmp}\label{ex:non-conv-bestresp}
    Consider a two agent team game\footnote{A team game is a game in which all agents have identical utility.} with utilities
    $$J_1^{\e}(x_1,x_2)=J_2^{\e}(x_1,x_2)=5-|x_1+x_2|-\e (x_1^2+x_2^2),$$
    with $\e=0.1$ and $\mathcal X_1=\mathcal X_2=[-100,100]$. The utilities are strictly concave and non differentiable, and the game has potential function
    $$\Phi^{\e}(x_1,x_2)=5-|x_1+x_2|-\e (x_1^2+x_2^2).$$
    Since $\Phi^{\e}(x_1,x_2)<5$ for all $(x_1,x_2)\ne (0,0)$ and $\Phi^{\e}(0,0)=5$, we have
    $$\Gamma^{\e}=\{(0,0)\}.$$
    However, the game has  other Nash equilibria. For example, $\x=(4,-4)$ is a Nash equilibrium of the game, as are many points on the line $x_1+x_2=0$. A best response iteration that starts with agent 1 having its strategy being 4, leads to the limit point Nash equilibrium $(4,-4)$, which is not in $\Gamma^{\e}$.
\end{exmp}
In order to reach a limit point which also optimizes the potential function, one can use a Minorize-Maximize approach.
\subsection*{Minorize Maximize (MM) Approach}
At the state level, one can implement an MM learning procedure as follows.
\begin{enumerate}
    \item Start with some $\x(0)\in\mathcal X$.
    \item The minorizing function for the potential $\Phi^0(\x)$ at $\x(n)$ is defined as $\phi(\x|\x(n))$ which satisfies
    \begin{enumerate}
        \item $\phi(\x|\x(n))\le \Phi^0(\x)~\forall\x\in\mathcal X$
        \item $\phi(\x(n)|\x(n))= \Phi^0(\x(n))$
    \end{enumerate}
    \item The next stage is obtained as
    $$\x(n+1)=\arg_{\x\in\mathcal X}\max \phi(\x|\x(n)).
    $$
\end{enumerate}

Since we have
\begin{align*}
    \Phi^0(\x(n)) &= \phi(\x(n)|\x(n))\le \phi(\x(n+1)|\x(n))\le \Phi^0(\x(n+1)),
\end{align*}
with every step of the MM iteration, the potential is non decreasing. We define the following MM learning scheme, called iterative MM (IMM).

\begin{defn}\label{def:IMM-def}
    Iterative MM (IMM) is defined as the MM learning scheme with surrogate functions given by
    $$\phi(\x|\x(n))=\Phi^0(\x)-\lambda||\x-\x(n)||^2.$$
\end{defn}

The MM dynamics can be interpreted as a majorization-minimization procedure applied to the potential function. At each iteration, the algorithm performs an exact maximization of a surrogate function that globally lower-bounds the potential while being tight at the current iterate. As a consequence, the sequence of potential values is non-decreasing and converges to the global maximum of $\Phi^0$ over the compact strategy set $\mathcal X$, implying that all limit points lie in $\Gamma^0$.

Note that these surrogates are agnostic to the structure of the potential function being optimized. It is also an example of a Bregman minorizer~\cite{lange2021nonconvex}. Noting that the IMM algorithm is also a proximal algorithm~\cite{parikh2014proximal}, we obtain the following convergence result.

\begin{prop}
    Let $(\x(n))_{n\ge 0}$ be a sequence of iterates generated by IMM. Then, $\x(n)$ converges to an element of $\Gamma^0$. Further, each $\x(n)$ is an $\epsilon(n)-$ equilibrium of $\G^0$, with $\epsilon(n)\to 0$ as $n\to\infty$.
\end{prop}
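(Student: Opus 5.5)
The plan is to read IMM as the proximal point algorithm applied to the concave function $\Phi^0$ on the convex compact set $\mathcal X$, with proximal parameter $\lambda>0$. By Assumption \ref{assumption:concave}, $\Phi^0=\J+\sum_j a_j^{-1}\mathbb{E}_{\tilde\q}[V_j\circ\mathcal R_j]$ is concave and, by Assumption \ref{assumption:contin}, continuous. The surrogate $\Phi^0(\x)-\lambda\|\x-\x(n)\|^2$ is therefore strictly concave, and each iterate $\x(n+1)$ is uniquely defined. In convex-optimization terms, $\x(n+1)=\mathrm{prox}_{f/(2\lambda)}(\x(n))$ with $f=-\Phi^0+\iota_{\mathcal X}$, a proper closed convex function. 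Its minimizer set $\Gamma^0$ is nonempty by the earlier lemma on $\Gamma^{\e}$.

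\textbf{Step 1 (convergence of the iterates).} I will prove Fejér monotonicity directly rather than only citing the proximal literature. Fix $\x^*\in\Gamma^0$. The optimality condition for $\x(n+1)$ says that $2\lambda(\x(n)-\x(n+1))$ is a supergradient of $\Phi^0+(-\iota_{\mathcal X})$ at $\x(n+1)$, i.e.\ $\Phi^0(\x)\le\Phi^0(\x(n+1))+2\lambda\langle \x(n)-\x(n+1),\x-\x(n+1)\rangle$ for all $\x\in\mathcal X$. Taking $\x=\x^*$ and expanding the inner product gives
$$\|\x(n+1)-\x^*\|^2\le\|\x(n)-\x^*\|^2-\|\x(n+1)-\x(n)\|^2-\tfrac{1}{\lambda}\big(\Phi^{0\star}-\Phi^0(\x(n+1))\big).$$
From this inequality:
\begin{itemize}
\item $\|\x(n)-\x^*\|$ is nonincreasing;
\item $\sum_n\|\x(n+1)-\x(n)\|^2<\infty$;
\item $\sum_n(\Phi^{0\star}-\Phi^0(\x(n)))<\infty$, so $\Phi^0(\x(n))\to\Phi^{0\star}$.
\end{itemize}
Let $\hat\x$ be a limit point, which exists by compactness of $\mathcal X$. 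By continuity $\Phi^0(\hat\x)=\Phi^{0\star}$, so $\hat\x\in\Gamma^0$. Applying Fejér monotonicity with $\x^*=\hat\x$ shows that $\|\x(n)-\hat\x\|$ is nonincreasing and has a subsequence tending to $0$, hence the whole sequence converges to $\hat\x$.

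\textbf{Step 2 (the $\epsilon(n)$-equilibrium claim).} Define $\epsilon(n):=a_1(\Phi^{0\star}-\Phi^0(\x(n)))\ge 0$. For any $i$ and $x\in\mathcal X_i$, \eqref{eq:pot-equal-util} gives
$$J_i^0(x,\x_{-i}(n))-J_i^0(\x(n))=a_i\big(\Phi^0(x,\x_{-i}(n))-\Phi^0(\x(n))\big)\le a_i\big(\Phi^{0\star}-\Phi^0(\x(n))\big)\le\epsilon(n),$$
using $a_i\le a_1$. So $\x(n)$ is an $\epsilon(n)$-Nash equilibrium of $\G^0$, and $\epsilon(n)\to0$ by Step 1. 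In fact $\epsilon(n)$ is nonincreasing by the MM monotonicity shown before Definition \ref{def:IMM-def}, and summable, so $\epsilon(n)=o(1/n)$. This is the same argument as in Lemma \ref{lem:eps-eq-reg}.

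\textbf{Main obstacle.} The delicate step is the supergradient inequality in Step 1 when $\Phi^0$ is non-smooth, as in Example \ref{ex:non-conv-bestresp}. I will justify it via the subdifferential sum rule for $-\Phi^0$ plus the indicator of $\mathcal X$; this is valid because $-\Phi^0$ is finite and convex on a neighbourhood of $\mathcal X$, or relative-interior qualification holds.

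If that route is awkward, a direct alternative avoids subdifferentials. Compare the surrogate at $\x(n+1)$ with its value at $(1-t)\x(n+1)+t\x^*$, use concavity of $\Phi^0$ along the segment, divide by $t$, and let $t\downarrow0$. This yields the same inequality using only the optimality of $\x(n+1)$ and the concavity of $\Phi^0$.
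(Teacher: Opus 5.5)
Your proof is correct. Its skeleton matches the paper's: identify IMM with the proximal point algorithm for the concave $\Phi^0$ on $\mathcal X$, conclude convergence to a point of $\Gamma^0$, then convert the potential gap into $\epsilon(n)$ through \eqref{eq:pot-equal-util} as in Lemma \ref{lem:eps-eq-reg}.

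The difference is in how convergence is obtained. The paper gets it, together with an $O(1/n)$ rate, in one line by citing G\"uler's Theorem 2.1 (weak convergence, upgraded to norm convergence in finite dimension). You instead reprove the needed part yourself: the Fej\'er inequality, summability of the gaps, and the standard ``Fej\'er monotone plus one convergent subsequence'' argument.

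Your route is self-contained and makes the rate explicit. Summing your inequality and using that the gap is nonincreasing gives $\Phi^{0\star}-\Phi^0(\x(n))\le \lambda\,\mathrm{dist}(\x(0),\Gamma^0)^2/n$, hence $\epsilon(n)\le a_1\lambda\,\mathrm{dist}(\x(0),\Gamma^0)^2/n$, and even $o(1/n)$ as you note. Up to a constant, this is the rate the paper quotes. The paper's displayed $\epsilon(n)$, however, drops the factor $a_1$ and the square on the distance; your derivation keeps both. The citation route buys only brevity.

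One slip needs fixing. The optimality condition says that $2\lambda(\x(n+1)-\x(n))$, not $2\lambda(\x(n)-\x(n+1))$, is a supergradient of $\Phi^0-\iota_{\mathcal X}$ at $\x(n+1)$. A quick check: take $\Phi^0(x)=-x^2$, $\lambda=1/2$, $x(n)=1$; then $x(n+1)=1/3$ and the derivative of $\Phi^0$ there is $-2/3=2\lambda(x(n+1)-x(n))$.

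With the sign as you wrote it, expanding $\|\x(n)-\x^*\|^2$ gives the reverse of what you need. With the correct sign you get $\Phi^{0\star}-\Phi^0(\x(n+1))\le 2\lambda\langle \x(n)-\x(n+1),\x(n+1)-\x^*\rangle$. This yields exactly the Fej\'er inequality you display, so nothing downstream changes.

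Your fallback segment argument produces this correctly signed inequality directly and is the cleaner justification here. In the paper $\J$ is only defined on $\mathcal X$, so ``finite on a neighbourhood of $\mathcal X$'' is not available. In fact no qualification condition is needed: treat $-\Phi^0+\iota_{\mathcal X}$ as a single closed proper convex function, and the only sum rule used is with the smooth quadratic.
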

\begin{proof}
    Applying~\cite[Theorem 2.1]{guler1991convergence}, we see that $(\x(n))_{n\ge 0}$ converges weakly to an element of $\Gamma^0$. Since $\mathcal X$ is finite dimensional, the convergence follows. We also have, from~\cite[Theorem 2.1]{guler1991convergence}, that
    $$\Phi^0(\x(n))\ge \Phi^{0\star}-\frac{\e \min_{\x\in\Gamma^0}||\x(0)-\x||}{2n}.$$
     Proceeding as in the proof of Lemma \ref{lem:eps-eq-reg} and using the above bound, we see that $\x(n)$ is an $\epsilon(n)-$equilibrium of $\G^0$, where
    $$\epsilon(n)=\frac{\e \min_{\x\in\Gamma^0}||\x(0)-\x||}{2n}.$$
\end{proof}

We also note that convergence can also be shown with other forms of the MM surrogate if $\Phi^0$ is smooth and strictly concave~\cite{rockafellar1976monotone}, or if the surrogates are differentiable and the algorithm map satisfies some technical conditions~\cite{lange2021nonconvex}.

\section{Numerical Examples}\label{sec:Numerics}
In this section, we compare the different algorithms that have been introduced in Section \ref{sec:Learning}, by means of different numerical examples. We consider smooth and non smooth games separately. For both of these classes, we provide examples that compare the convergence behaviour of different leaning algorithms. We also provide models for smooth and non smooth games arising from practical problems, and the behaviour of different learning algorithms applied to these problems.
\subsection{Smooth Games}
\subsubsection*{Example: Convergence Behaviour of Learning Algorithms}
Consider a smooth game with $N=2$ agents. The utilities are of the form
$$J_i^{\e}(x_i,x_{-i})=10-(x_1+x_2-2)^2+\mathbb{E}[V_i(\mathcal R_i(x_i))]-\lambda(x_1^2+x_2^2),$$
with $\lambda=0.1$, $\mathcal R_i(x_i)=x_i\xi-d_i$ where $\xi$ is supported over the set $\{2,10\}$ with distribution $(0.8,0.2)$, and
\begin{align}
    V_1(x) =\log(1+x)\mathbf{1}_{(x\ge 0)}+x\mathbf{1}_{(x<0)},~V_2(x) =x.
\end{align}
\begin{figure}
        \centering
        \includegraphics[scale=0.7]{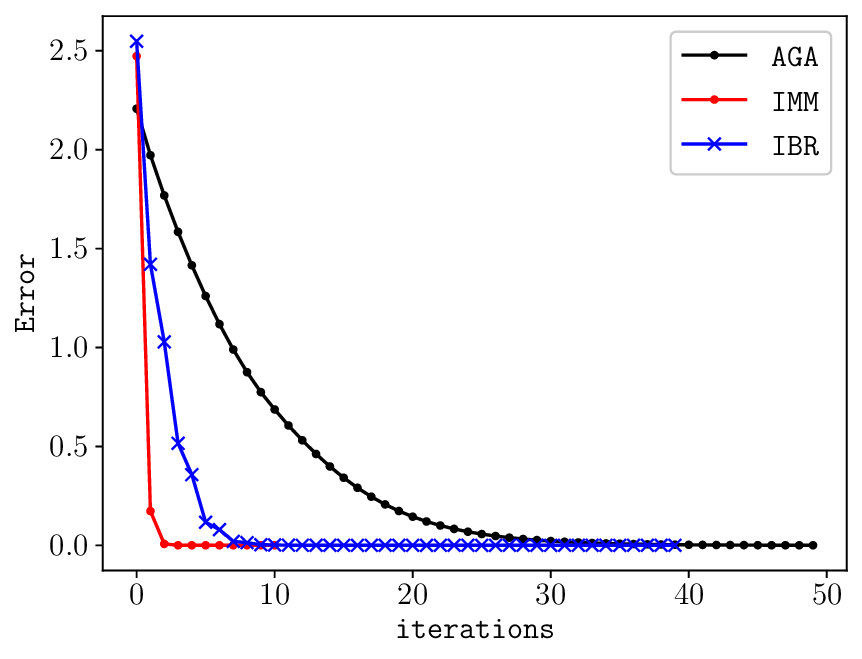}
        \caption{Smooth game.}
        \label{fig:smooth-nums}
\end{figure}
The game has a unique Nash equilibrium, which also coincides with the optimal point of the potential function. We plot the error (i.e., distance to Nash equilibrium) comparing gradient descent (AGA), iterative best response (IBR) and iterative MM (IMM) in Fig. \ref{fig:smooth-nums}, starting from the same initial point. AGA uses Nesterov's accelerated gradient ascent scheme.  From the plot it is clear that IMM converges faster than both IBR and AGA to the Nash equilibrium, and in fewer number of iterations. While IBR has lower inter-agent communication complexity that AGA, calculating the best response at each agent,  may require a higher computational complexity at each agent, as opposed to gradient computation. Depending on the structure of the utility function, this may have differing complexity. In some cases it will be equivalent to solving a linear or quadratic equation. In other instances it may be that the agents have to use gradient or other learning methods locally in order to compute their best response. The minorizers for each stage of IMM were created as in definition \ref{def:IMM-def}, with $\lambda=0.1$.
\subsubsection*{Application: Designing Incentives in an Energy Community}
 Consider an example of a collective of users 
buying electricity from an electricity producer. Each user buys $x_i$ units from the producer at a price $p_i$ per unit. Each user has a consumption threshold $d_i$. This could, for example, represent a minimum consumption level to meet their requirements. If the purchased quantity is above this threshold, then the user benefits from the purchase; otherwise it is a loss. The individual random reward is 
\begin{align}
    \mathcal R_i(x_i,\xi)=(x_i-d_i)\xi,
\end{align}
where $\xi$ models randomness of outcome. The collective usage benefit is
\begin{align}
    \mathcal J(\x)=-\sum_{i=1}^N(x_i-d)^2,~\mathcal R_i(x_i,\xi)=(x_i-d_i)\xi.
\end{align}
This represents an attempt by the collective of the users to keep the purchase vector $\x$ close to some desired vector $(d,...,d)$. Here $d$ represents a target energy consumption, for example, mapped from target carbon emission levels. The goal is to design incentives that drive the collective usage benefit $\J(\x)$ to a desired target $\tau$. We note that it in this case, $\J(\x)$ is in itself strictly convex. We define the incentivized utility of the system to be 
$$J_i(x_i,\x_{-i})=\J^{\e_i}(\x)+\mathbb{E}[V_i(\mathcal R_i(x_i,\xi))],$$
where $\J^{\e_i}(\x)=\J(\x)-\e_ix_i$, with $\e_ix_i$ being a linear incentive function (since in this case there is no separate need for a regularizer, since we start with a unique equilibrium). This form of utility is a variant of \eqref{eq:utility-fn-def}, and all  results discussed previously will hold in this case as well. The $\e_i$ are interpreted as per unit incentives/prices to users.

Using iterative MM, one can design a simple learning process to reach the target $\tau$ for $\J(\x)$ as follows. We let the incentives evolve such that they follow a gradient of the function $(\J(\x)-\tau)^2$, while simultaneously proceeding with MM optimization of the potential function of the game. We plot the results for a two player game, with 
$$V_1(x)=\log(1+x)\mathbf{1}_{(x\ge 0)}+x\mathbf{1}_{(x<0)},~V_2(x)=x,$$
and parameters $d=4$, $d_1=1$, $d_2=2$, and $\xi$ supported over the set $\{5,1\}$ with distribution $\{0.8,0.2\}$. We plot three sample paths, all starting from the point $(x_1,x_2)=(1,1)$ with targets $\tau=-4,-4.5$ and $-5$.
\begin{figure}
        \centering
        \includegraphics[scale=0.7]{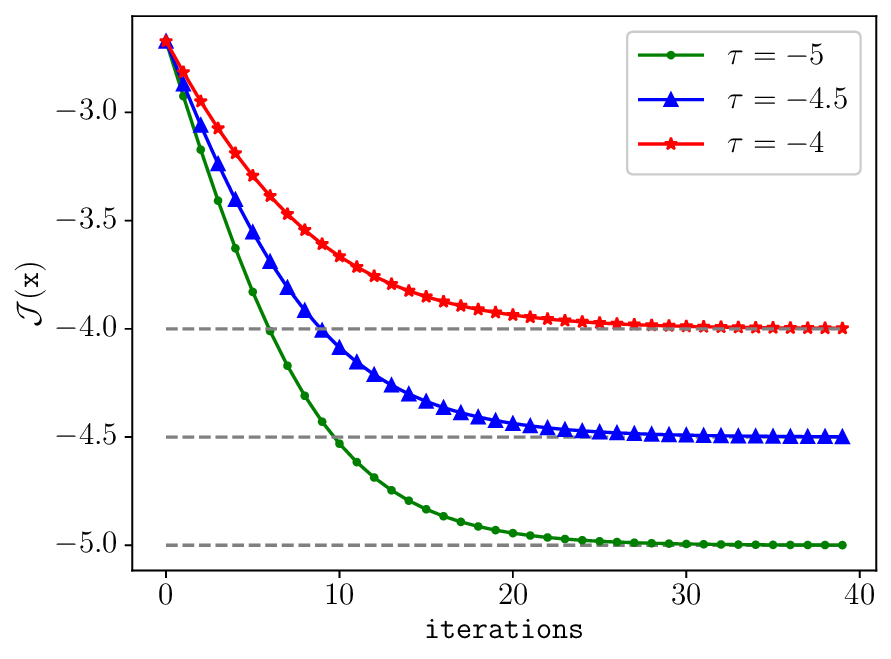}
        \caption{Steering the collective usage function $\J(\x)$ to a desired $\tau$ using IMM.}
        \label{fig:iter-MM-ns}
\end{figure}

\subsection{Non Smooth Games}
\subsubsection*{Convergence: Advantage of MM over Best Response}
In non smooth games, MM approaches have a distinct advantage over best response based approaches. 
In order to demonstrate this, 
we consider the same non smooth game that was considered in Example \ref{ex:non-conv-bestresp}, where we had a two agent team game with utilities
    $$J_1^{\e}(x_1,x_2)=J_1^{\e}(x_1,x_2)=5-|x_1+x_2|-\e (x_1^2+x_2^2),$$
    with $\e=0.1$. Recall that the potential function corresponding to this game had a unique maximum at $(0,0)$, but the game had other Nash equilibria as well.

    We compare the convergence behaviour of iterative MM (IMM),  accelerated (sub) gradient ascent (sGA) and iterative best response (IBR) for this game. We see how the system state $\x=(x_1,x_2)$ evolves starting from the same five initial states, $(2,4)$, $(-4,-4)$, $(-5,4)$, $(10,5)$ and $(10,-1)$. In Fig. \ref{fig:iter-MM-ns}, we plot the iterates of IMM, with the contours of the potential function in the background. The algorithm converges to $(0,0)$, in a rather direct matter, in a few iterations. The algorithm takes steps of large magnitude intitally, and then proceeds in smaller steps.

\begin{figure}
    \centering
    \begin{subfigure}[t]{0.5\textwidth}
        \centering
        \includegraphics[scale=0.5]{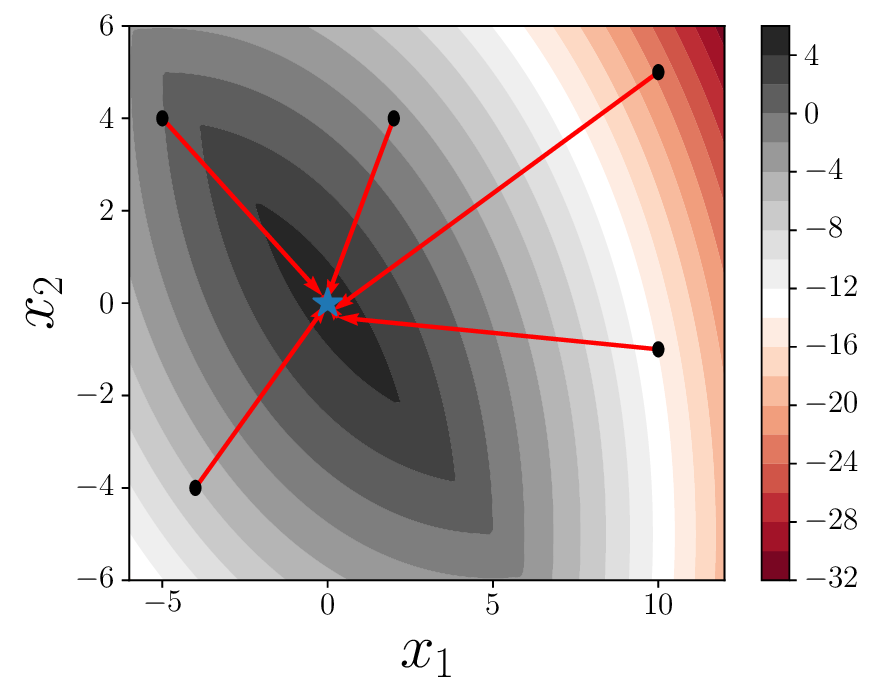}
        \caption{Zoomed out}
    \end{subfigure}%
    ~ 
    \begin{subfigure}[t]{0.5\textwidth}
        \centering
        \includegraphics[scale=0.5]{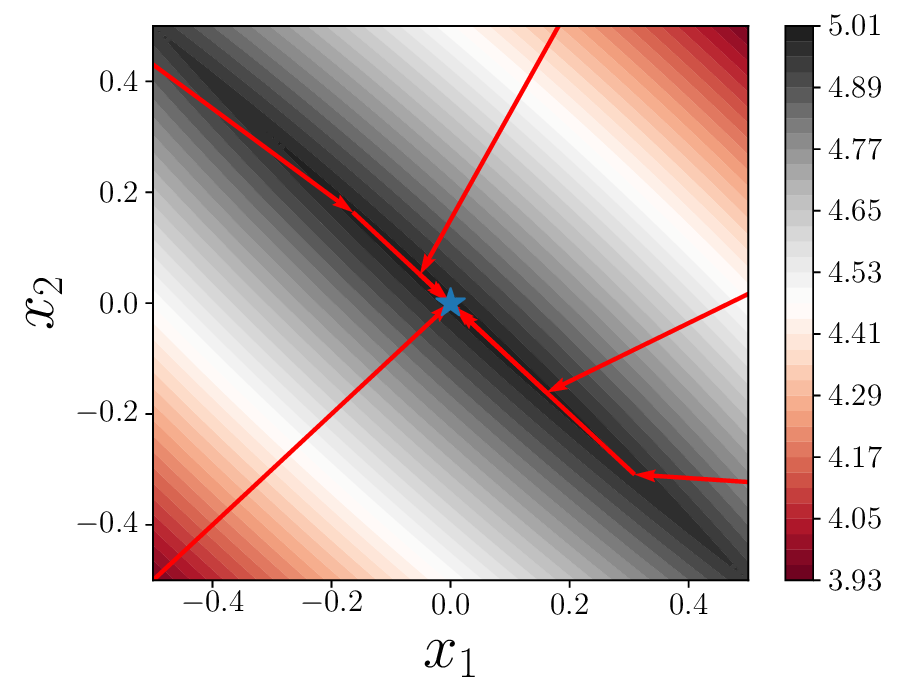}
        \caption{Zoomed in}
    \end{subfigure}
    \caption{State evolution of iterative MM, along the contours of the potential function, starting from different initial states. On the left we have a zoomed out perspective, where we see the larger intial steps of the algorithm. On the right, we zoo in to see the finer steps of the state evolution close to the potential-optimal Nash equilibrium, represented by the blue star at $(0,0)$.}
\label{fig:iter-MM-ns}
\end{figure}
    In the case of sGA, in Fig.  \ref{fig:inter-GA-ns}, we see that they converge to  $(0,0)$, by first reaching the line $x_1+x_2=0$, and then ascending to $(0,0)$.  Note that sGA is the same as AGA discussed previously, with the additional property that at non differentiable points, it chooses a value from the set of subdgradients as the derivative, uniformly randomly. We use a fixed small step size of 0.1 for the gradient ascent, because it allows for faster convergence than a decreasing step size. The trajectory shown is averaged over 100 sample paths.
          \begin{figure}
        \centering
        \includegraphics[scale=0.6]{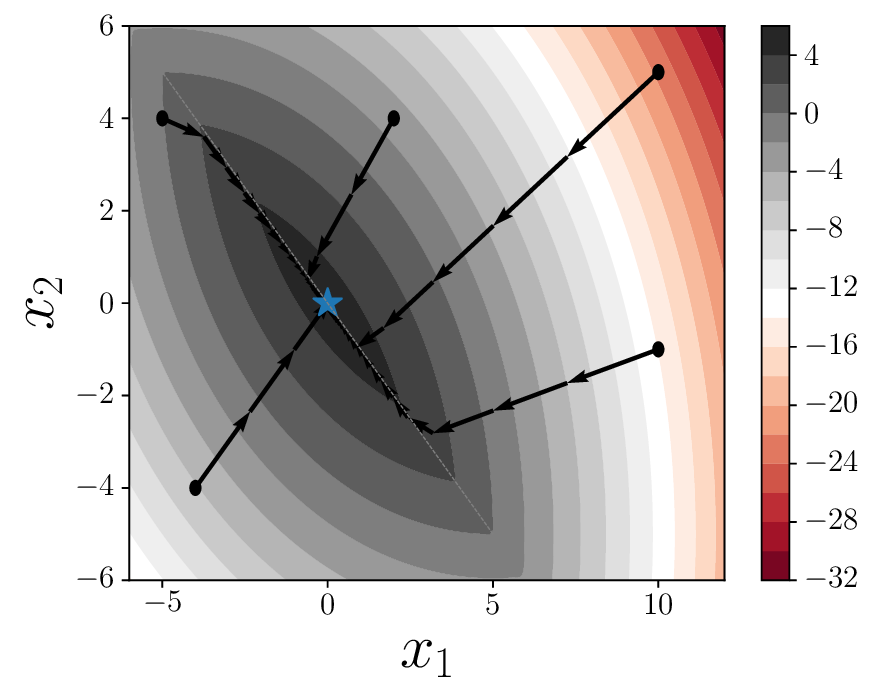}
        \caption{State evolution of sGA, along the contours of the potential function, starting from different initial states.}
        \label{fig:inter-GA-ns}
    \end{figure}
    
    In the case of IBR in Fig. \ref{fig:iter-BR-ns}, we see that iterative best response leads strategies to fixed points that are Nash equilibria of the game, but not optimizers of the potential function. We see that the agents move sequentially, with agent 1 updating their startegy followed by agent 2. They cluster along the line $x_1+x_2=0$. While IBR converges faster than sGA, it cannot move once it reaches the line $x_1+x_2=0$.
\begin{figure}
    \centering
    \includegraphics[scale=0.6]{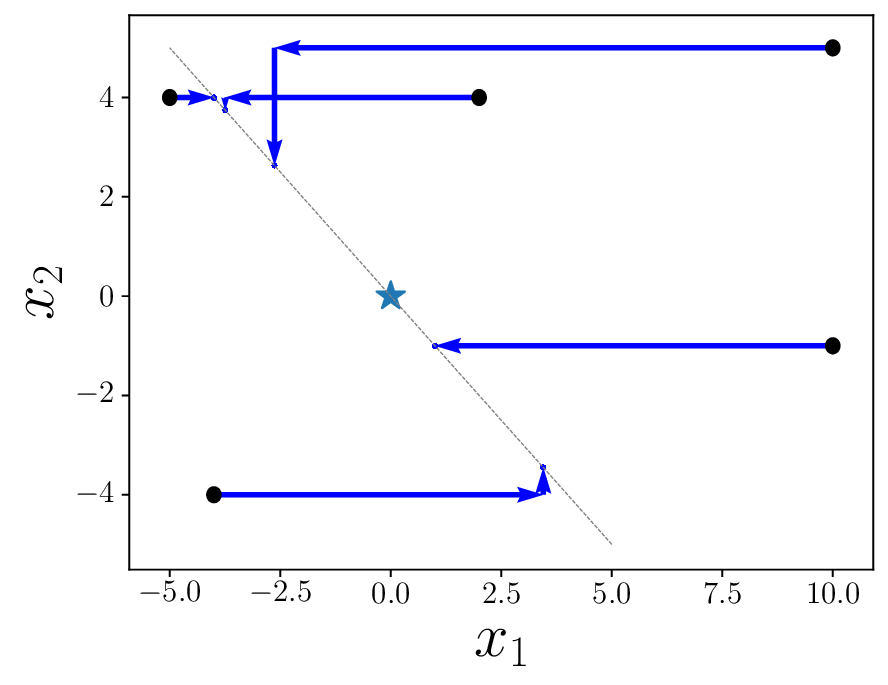}
        \caption{State evolution of iterative BR, converging to points on the line $x_1+x_2=0$, starting from different initial states.}
        \label{fig:iter-BR-ns}
    \end{figure}

   \subsubsection*{Application: Multi Agent Distributed Learning on a Grid}
We consider the problem of steering multiple agents on a grid to a common point. These could be robotic agents as in the rendezvous problem~\cite{cortes2017coordinated,lin2007multi}, or vehicles coordinating their movement in a city with a Manhattan type layout~
\cite{zhang2023routing}. We consider an example with three agents moving on the positive quadrant. Let the positions (states) of the agents be $\w_i=(x_i,y_i)$ for $i=1,2,3$, and the joint state be $\w=(\w_1,\w_2,\w_3)$. We assume that agents have a cost structure of the form \eqref{eq:utility-fn-def}. Let the collective benefit function be the total absolute error
\begin{align}
   \J(\w)=-\sum_{i=1}^3\sum_{j=1}^{i-1}|x_i-x_j|+|y_i-y_j|, 
\end{align}
which is a measure of closeness of the agents. This benefit is maximized when all the agents coincide in their location. The $L_1$ metric indicates that the agents are constrained to move in a grid. Each agent will also have a cost associated with its current position, which is $C_i(x_i,y_i)$ which captures its perception of its current state. We assume that
\begin{align}
    C_1(x_1,y_1) &=c_1(x_1+y_1),\\
    C_2(x_2,y_2) &=c_2(2-\mathbb{E}\exp(-k_1x_2\xi)-\mathbb{E}\exp(k_2y_2\xi)),\\
    C_3(x_3,y_3) &=c_3(2-\exp(-k_3x_3)-\exp(-k_4y_4)).
\end{align}
Thus agents 2 and 3 have a skewed perception of their cost.
We have the utilities
$$J_i(\w_i,\w_{-i})=\J(\w)-\lambda H(\w)-C_i(x_i,y_i).$$
Here $H$ is the regularizer, which we choose to be $H(\w)=\sum_{j=1}^3||\w_i||^2$. We have $\lambda=1$,$c_1=4$, $c_2=0.4$, $c_3=1$, $k_1=0.1$, $k_2=0.3$, $k_3=k_4=1$, and $\xi$ is supported over the set $\{1,100\}$ with distribution $\{0.9,0.1\}$. We consider two approaches to solve this, a distributed version of iterative MM (IMMd) and  (sub) Gradient Ascent (SGA), with initial states $\w_1=(10,0)$, $\w_2=(0,10)$ amd $\w_3=(10,10)$. We also assume that the agents are  also constrained to move in the grid along one of the cardinal directions (up, down, left, right) with unit steps. In Fig. \ref{fig:rob-MM} we show how the three robotic agents move to converge at a common point, using IMMd. In IMMd, the agents proceed in a cyclic fashion. Each one maximizes the MM surrogate locally, for a given state of the other agents. It then passes the updates state information to the next agent, which proceeds likewise.  A coordinating agent is necessary to enable the intermediate communication of system states. The same process using sGA, is shown in Fig \ref{fig:rob-GA}. Using IMM, two of the agents meet much earlier on than in sGA. This suggests that IMM allows for earlier formation of clusters. We see that in sGA, at the termination, the agents do not meet at one point, but continue to move in a loop. This is a result of the synchronized and restricted grid movement with a single step at each instant, and can be remedied by de-synchronizing the agents. 
\begin{figure}
    \centering
    \begin{subfigure}[t]{0.5\textwidth}
        \centering
        \includegraphics[scale=0.5]{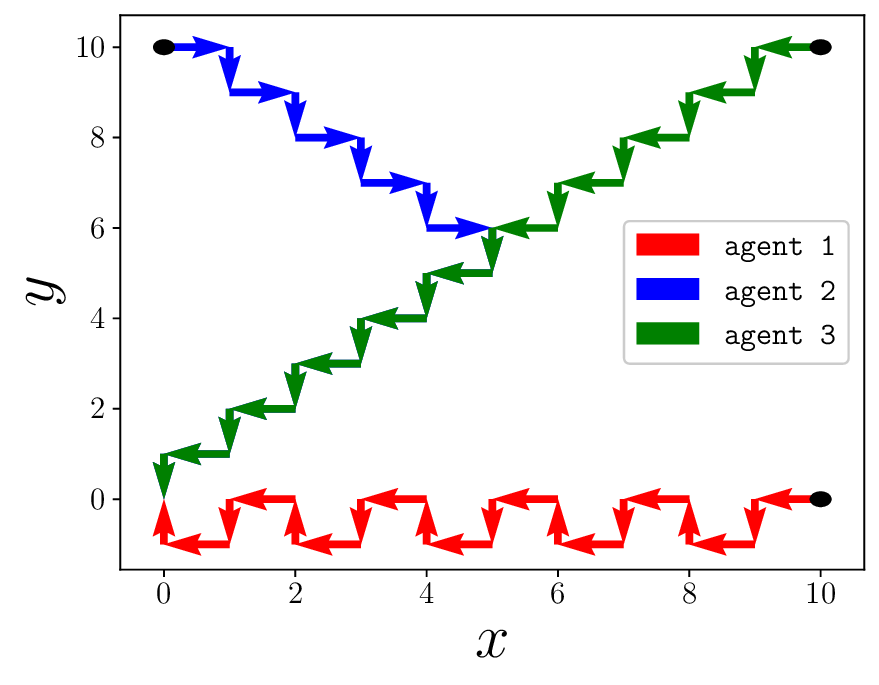}
        \caption{Agents using IMMd}
        \label{fig:rob-MM}
    \end{subfigure}%
    ~ 
    \begin{subfigure}[t]{0.5\textwidth}
        \centering
        \includegraphics[scale=0.5]{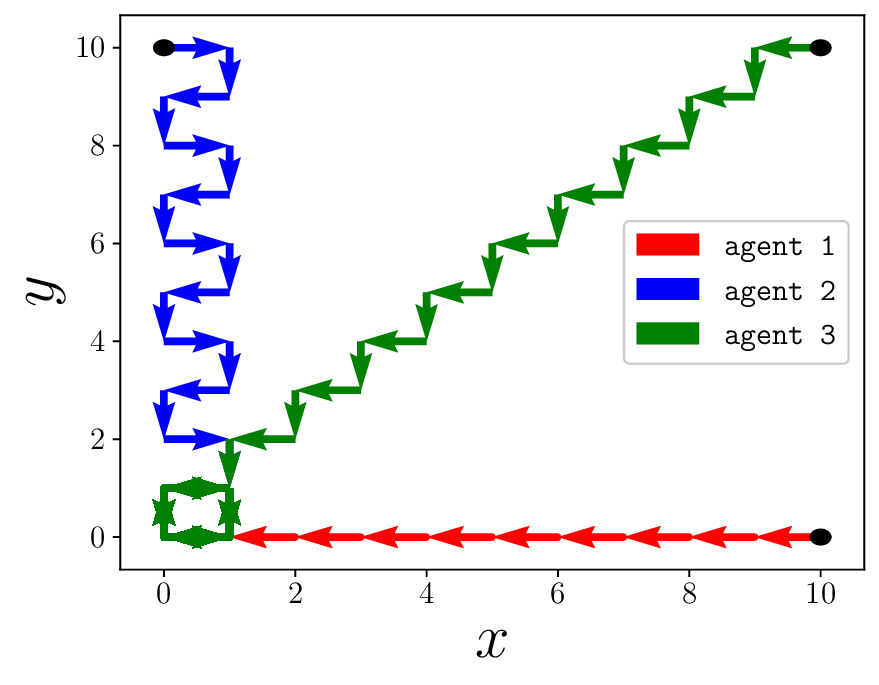}
        \caption{Agents using sGA}
        \label{fig:rob-GA}
    \end{subfigure}
    \caption{Movement of three robotic agents starting from three different locations, for different algorithms.}
\label{fig:rob-move}
\end{figure}

\section{Conclusions and Future Work}
In this paper, we studied a class of games in which agents derive utility 
from both a shared collective benefit and individually perceived stochastic rewards 
modeled via prospect theory. Such games may admit multiple equilibria, especially 
in the absence of regularization. To address this, we introduced a regularization 
framework that induces strict concavity of the potential function, ensuring the 
existence of a unique equilibrium. We also established a quantitative relationship 
between the equilibria of the regularized and unregularized games, providing insight 
into equilibrium selection. We then analyzed learning dynamics for computing equilibria. While gradient-based 
and best response methods converge in smooth settings, they may fail to reach 
potential-maximizing equilibria in non-smooth games. To overcome this limitation, 
we proposed an MM-based learning scheme, which guarantees convergence to a 
potential-optimal equilibrium and demonstrates improved convergence behavior. 
These results highlight the importance of regularization and optimization-based 
learning in achieving robust equilibrium selection, and suggest several directions 
for future work, including extensions to non-convex and more general behavioral models.

\bibliographystyle{ieeetr}

\bibliography{cas-refs}

@book{von1944theory,
  title={Theory of games and economic behavior.},
  author={Von Neumann, J and Morgenstern, O},
  year={1944},
  publisher={Princeton University Press}
}

@article{rosen1965existence,
  title={Existence and uniqueness of equilibrium points for concave n-person games},
  author={Rosen, J Ben},
  journal={Econometrica: Journal of the Econometric Society},
  pages={520--534},
  year={1965},
  publisher={JSTOR}
}

@book{jallais2005allais,
  title={The {A}llais paradox and its immediate consequences for expected utility theory},
  author={Jallais, Sophie and Pradier, Pierre-Charles},
  journal={The experiment in the history of economics},
  pages={25--49},
  year={2005},
  publisher={Routledge New York}
}

@article{kai1979prospect,
  title={Prospect theory: An analysis of decision under risk},
  author={Kahneman, Daniel and Tversky, Amos},
  journal={Econometrica},
  volume={47},
  number={2},
  pages={363--391},
  year={1979}
}

@book{bach2013learning,
  title={Learning with Submodular Functions: A Convex Optimization Perspective},
  author={Bach, Francis},
  year={2013},
  publisher={Foundations and Trends in Machine Learning}
}

@article{stott2006cumulative,
  title={Cumulative prospect theory's functional menagerie},
  author={Stott, Henry P},
  journal={Journal of Risk and uncertainty},
  volume={32},
  pages={101--130},
  year={2006},
  publisher={Springer}
}

@article{holmes2011management,
  title={Management theory applications of prospect theory: Accomplishments, challenges, and opportunities},
  author={Holmes Jr, R Michael and Bromiley, Philip and Devers, Cynthia E and Holcomb, Tim R and McGuire, Jean B},
  journal={Journal of Management},
  volume={37},
  number={4},
  pages={1069--1107},
  year={2011},
  publisher={Sage Publications Sage CA: Los Angeles, CA}
}

@article{metzger2019non,
  title={Non-cooperative games with prospect theory players and dominated strategies},
  author={Metzger, Lars Peter and Rieger, Marc Oliver},
  journal={Games and Economic Behavior},
  volume={115},
  pages={396--409},
  year={2019},
  publisher={Elsevier}
}

@article{keskin2016equilibrium,
  title={Equilibrium notions for agents with cumulative prospect theory preferences},
  author={Keskin, Kerim},
  journal={Decision Analysis},
  volume={13},
  number={3},
  pages={192--208},
  year={2016},
  publisher={INFORMS}
}

@article{vahid2019modeling,
  title={Modeling noncooperative game of GENCOs’ participation in electricity markets with prospect theory},
  author={Vahid-Pakdel, MJ and Ghaemi, Sina and Mohammadi-Ivatloo, Behnam and Salehi, Javad and Siano, Pierluigi},
  journal={IEEE Transactions on Industrial Informatics},
  volume={15},
  number={10},
  pages={5489--5496},
  year={2019},
  publisher={IEEE}
}

@article{merrick2016modeling,
  title={Modeling adversaries in counterterrorism decisions using prospect theory},
  author={Merrick, Jason RW and Leclerc, Philip},
  journal={Risk Analysis},
  volume={36},
  number={4},
  pages={681--693},
  year={2016},
  publisher={Wiley Online Library}
}

@article{shalev2000loss,
  title={Loss aversion equilibrium},
  author={Shalev, Jonathan},
  journal={International Journal of Game Theory},
  volume={29},
  pages={269--287},
  year={2000},
  publisher={Springer}
}

@article{gan2022application,
  title={Application and outlook of prospect theory applied to bounded rational power system economic decisions},
  author={Gan, Lei and Hu, Yangyi and Chen, Xingying and Li, Gengyin and Yu, Kun},
  journal={IEEE Transactions on Industry Applications},
  volume={58},
  number={3},
  pages={3227--3237},
  year={2022},
  publisher={IEEE}
}

@article{fochesato2025noncooperative,
  title={Noncooperative games with prospect theoretic preferences},
  author={Fochesato, Marta and Pokou, Fr{\'e}dy and Le Cadre, H{\'e}l{\`e}ne and Lygeros, John},
  journal={IEEE Control Systems Letters},
  year={2025},
  publisher={IEEE}
}

@article{monderer1996potential,
  title={Potential games},
  author={Monderer, Dov and Shapley, Lloyd S},
  journal={Games and economic behavior},
  volume={14},
  number={1},
  pages={124--143},
  year={1996},
  publisher={Elsevier}
}

@article{jensen2010aggregative,
  title={Aggregative games and best-reply potentials},
  author={Jensen, Martin Kaae},
  journal={Economic theory},
  volume={43},
  number={1},
  pages={45--66},
  year={2010},
  publisher={Springer}
}

@article{neyman1997correlated,
  title={Correlated equilibrium and potential games},
  author={Neyman, Abraham},
  journal={International Journal of Game Theory},
  volume={26},
  number={2},
  pages={223--227},
  year={1997},
  publisher={Springer}
}

@article{lin2007multi,
  title={The multi-agent rendezvous problem. Part 1: The synchronous case},
  author={Lin, Jie and Morse, A Stephen and Anderson, Brian DO},
  journal={SIAM Journal on Control and Optimization},
  volume={46},
  number={6},
  pages={2096--2119},
  year={2007},
  publisher={SIAM}
}

@article{cortes2017coordinated,
  title={Coordinated control of multi-robot systems: A survey},
  author={Cort{\'e}s, Jorge and Egerstedt, Magnus},
  journal={SICE Journal of Control, Measurement, and System Integration},
  volume={10},
  number={6},
  pages={495--503},
  year={2017},
  publisher={Taylor \& Francis}
}

@article{lange2021nonconvex,
  title={Nonconvex optimization via {MM} algorithms: Convergence theory},
  author={Lange, Kenneth and Won, Joong-Ho and Landeros, Alfonso and Zhou, Hua},
  journal={arXiv preprint arXiv:2106.02805},
  year={2021}
}

@book{bertsekas1999nonlinear,
  title={Nonlinear programming},
  author={Bertsekas, Dimitri P},
  publisher={Athena Scientific},
  year={1999}
}

@inproceedings{ashok2025achieving,
  title={Achieving a {C}ollective {T}arget Through {I}ncentives},
  author={KS, Ashok Krishnan and Le Cadre, H{\'e}l{\`e}ne and Bu{\v{s}}i{\'c}, Ana},
  booktitle={International Conference on Network Games, Artificial Intelligence, Control and Optimization},
  pages={57--67},
  year={2025},
  organization={Springer}
}

@inproceedings{ks2025irrationality,
  title={How irrationality shapes nash equilibria: A prospect-theoretic perspective},
  author={KS, Ashok Krishnan and Le Cadre, H{\'e}l{\`e}ne and Bu{\v{s}}i{\'c}, Ana},
  booktitle={2025 IEEE 64th Conference on Decision and Control (CDC)},
  pages={4428--4433},
  year={2025},
  organization={IEEE}
}

@article{gilles2023emergent,
  title={Emergent collaboration in social purpose games},
  author={Gilles, Robert P and Mallozzi, Lina and Messalli, Roberta},
  journal={Dynamic Games and Applications},
  volume={13},
  number={2},
  pages={566--588},
  year={2023},
  publisher={Springer}
}

@book{voorneveld1999potential,
  title={Potential games and interactive decisions with multiple criteria},
  author={Voorneveld, Mark},
  publisher={Ph.D. Thesis, Tilburg University},
  year={1999}
}

@article{mckelvey1995quantal,
  title={Quantal response equilibria for normal form games},
  author={McKelvey, Richard D and Palfrey, Thomas R},
  journal={Games and economic behavior},
  volume={10},
  number={1},
  pages={6--38},
  year={1995},
  publisher={Elsevier}
}

@article{alos2010logit,
  title={The logit-response dynamics},
  author={Al{\'o}s-Ferrer, Carlos and Netzer, Nick},
  journal={Games and Economic Behavior},
  volume={68},
  number={2},
  pages={413--427},
  year={2010},
  publisher={Elsevier}
}

@article{aghassi2006robust,
  title={Robust game theory},
  author={Aghassi, Michele and Bertsimas, Dimitris},
  journal={Mathematical programming},
  volume={107},
  number={1},
  pages={231--273},
  year={2006},
  publisher={Springer}
}

@book{cooper1999coordination,
  title={Coordination games},
  author={Cooper, Russell},
  year={1999},
  publisher={{C}ambridge university Press}
}

@article{lasry2007mean,
  title={Mean field games},
  author={Lasry, Jean-Michel and Lions, Pierre-Louis},
  journal={Japanese journal of mathematics},
  volume={2},
  number={1},
  pages={229--260},
  year={2007},
  publisher={Springer}
}

@article{mccain2008cooperative,
  title={Cooperative games and cooperative organizations},
  author={McCain, Roger A},
  journal={The Journal of Socio-Economics},
  volume={37},
  number={6},
  pages={2155--2167},
  year={2008},
  publisher={Elsevier}
}

@article{kukushkin2004best,
  title={Best response dynamics in finite games with additive aggregation},
  author={Kukushkin, Nikolai S},
  journal={Games and Economic Behavior},
  volume={48},
  number={1},
  pages={94--110},
  year={2004},
  publisher={Elsevier}
}

@article{hunter2004tutorial,
  title={A tutorial on MM algorithms},
  author={Hunter, David R and Lange, Kenneth},
  journal={The American Statistician},
  volume={58},
  number={1},
  pages={30--37},
  year={2004},
  publisher={Taylor \& Francis}
}

@article{guler1991convergence,
  title={On the convergence of the proximal point algorithm for convex minimization},
  author={G{\"u}ler, Osman},
  journal={SIAM journal on control and optimization},
  volume={29},
  number={2},
  pages={403--419},
  year={1991},
  publisher={SIAM}
}

@article{parikh2014proximal,
  title={Proximal algorithms},
  author={Parikh, Neal and Boyd, Stephen},
  journal={Foundations and Trends in optimization},
  volume={1},
  number={3},
  pages={127--239},
  year={2014},
  publisher={Emerald Publishing Limited}
}

@article{rockafellar1976monotone,
  title={Monotone operators and the proximal point algorithm},
  author={Rockafellar, R Tyrrell},
  journal={SIAM journal on control and optimization},
  volume={14},
  number={5},
  pages={877--898},
  year={1976},
  publisher={SIAM}
}

@inproceedings{candogan2011learning,
  title={Learning in near-potential games},
  author={Candogan, Ozan and Ozdaglar, Asuman and Parrilo, Pablo A},
  booktitle={2011 50th IEEE Conference on Decision and Control and European Control Conference},
  pages={2428--2433},
  year={2011},
  organization={IEEE}
}

@article{hao2023exploration,
  title={Exploration in deep reinforcement learning: From single-agent to multiagent domain},
  author={Hao, Jianye and Yang, Tianpei and Tang, Hongyao and Bai, Chenjia and Liu, Jinyi and Meng, Zhaopeng and Liu, Peng and Wang, Zhen},
  journal={IEEE transactions on neural networks and learning systems},
  volume={35},
  number={7},
  pages={8762--8782},
  year={2023},
  publisher={IEEE}
}

@article{zhang2023routing,
  title={Routing optimization with vehicle--customer coordination},
  author={Zhang, Wei and Jacquillat, Alexandre and Wang, Kai and Wang, Shuaian},
  journal={Management Science},
  volume={69},
  number={11},
  pages={6876--6897},
  year={2023},
  publisher={INFORMS}
}

\end{document}